\definecolor{sepia}{hsb}{0.05,0.75,0.65}
\title{
 Computing the Greedy Spanner in Linear Space
}
\author{Sander P. A. Alewijnse \and
        Quirijn W. Bouts \and \\
        Alex P. ten Brink \and
        Kevin Buchin
}
\institute{
Eindhoven University of Technology, The Netherlands,
\path|{s.p.a.alewijnse,	q.w.bouts,a.p.t.brink}@student.tue.nl|,
\path|k.a.buchin@tue.nl|
}
\authorrunning{S.P.A.~Alewijnse, Q.W.~Bouts, A.P.~ten Brink, K.~Buchin}
\spnewtheorem{observation}[theorem]{Observation}{\bfseries}{\itshape}
\spnewtheorem{fact}[theorem]{Fact}{\bfseries}{\itshape}
\let\doendproof\endproof
\renewcommand\endproof{~\hfill$\qed$\doendproof}
\newcommand{\Reals}{\mathbb{R}}
\begin{document}

\mainmatter

\maketitle

\begin{abstract}
The greedy spanner is a high-quality spanner: its total weight, edge count and maximal degree are asymptotically optimal and in practice significantly better than for any other spanner with reasonable construction time. Unfortunately, all known algorithms that compute the greedy spanner of $n$ points use $\Omega(n^2)$ space, which is impractical on large instances. To the best of our knowledge, the largest instance for which the greedy spanner
was computed so far has about 13,000 vertices.

We present a $O(n)$-space algorithm that computes the same spanner for points in $\mathbb{R}^d$ running in $O(n^2 \log^2 n)$ time for any fixed stretch factor and dimension. We discuss and evaluate a number of optimizations to its running time, which allowed us to compute the greedy spanner on a graph with a million vertices. To our knowledge, this is also the first algorithm for the greedy spanner with a near-quadratic running time guarantee that has actually been implemented.
\end{abstract}

\section{Introduction}

A $t$-spanner on a set of points, usually in the Euclidean plane, is a graph on these points that is a `$t$-approximation' of the complete graph, in the sense that shortest routes in the graph are at most $t$ times longer than the direct geometric distance. The spanners considered in literature have only $O(n)$ edges as opposed to the $O(n^2)$ edges in the complete graph, or other desirable properties such as bounded diameter or bounded degree, which makes them a lot more pleasant to work with than the complete graph.

Spanners are used in wireless network design~\cite{GaoGHZZ05}: for example, high-degree routing points in such networks tend to have problems with interference, so using a spanner with bounded degree as network avoids these problems while maintaining connectivity. They are also used as a component in various other geometric algorithms, and are used in distributed algorithms. Spanners were introduced in network design \cite{JGT:JGT3190130114} and geometry \cite{Chew1989}, and have since been subject to a considerable amount of research~\cite{DilationandDetours,Narasimhan:2007:GSN:1208237}.

There exists a large number of constructions of $t$-spanners that can be parameterized with arbitrary $t>1$. They have different strengths and weaknesses: some are fast to construct but of low quality ($\Theta$-graph, which has no guarantees on its total weight), others are slow to construct but of high quality (greedy spanner, which has low total weight and maximum degree), some have an extremely low diameter (various dumbbell based constructions) and some are fast to construct in higher dimensions (well-separated pair decomposition spanners). See for example \cite{Narasimhan:2007:GSN:1208237} for detailed expositions of these spanners and their properties.

\begin{figure}[t]\centering
\includegraphics{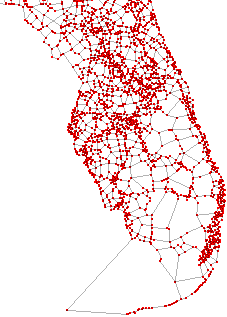}
\includegraphics{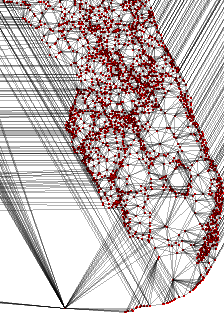}
\caption{The left rendering shows the greedy spanner on the USA, zoomed in on Florida, with $t=2$. The dataset has 115,475 vertices, so it was infeasible to compute this graph before. The right rendering shows the $\Theta$-graph on the USA, zoomed in on Florida, with $k=6$ for which it was recently proven it achieves a dilation of 2.}
\label{figure:greedy}
\end{figure}

The greedy spanner is one of the first spanner algorithms that was considered, and it has been subject to a considerable amount of research regarding its properties and more recently also regarding computing it efficiently. This line of research resulted in a $O(n^2 \log n)$ algorithm \cite{BoseCFMS2010} for metric spaces of bounded doubling dimension (and therefore also for Euclidean spaces). There is also an algorithm with $O(n^3 \log n)$ worst-case running time that works well in practice~\cite{FarshiG09}. Its running time tends to be near-quadratic in practical cases, but there are examples on which its running time is $\Theta(n^3 \log n)$. Its space usage is $\Theta(n^2)$.

Among the many spanner algorithms known, the greedy spanner is of special interest because of its exceptional quality: its size, weight and degree are asymptotically optimal, and also in practice are better than any other spanner construction algorithms with reasonable running times. For example, it produces spanners with about ten times as few edges, twenty times smaller total weight and six times smaller maximum degree as its closest well-known competitor, the $\Theta$-graph, on uniform point sets. The contrast is clear in Fig.~\ref{figure:greedy}. Therefore, a method of computing it more efficiently is of considerable interest.

We present an algorithm whose space usage is $\Theta(n)$ whereas existing algorithms use $\Theta(n^2)$ space, while being only a logarithmic factor slower than the fastest known algorithm, thus answering a question left open in \cite{BoseCFMS2010}.
Our algorithm makes the greedy spanner practical to compute for much larger inputs than before: this used to be infeasible on graphs of over 15,000 vertices. In contrast, we tested our algorithm on instances of up to 1,000,000 points, for which previous algorithms would require multiple terabytes of memory. Furthermore, with the help of several optimizations we will present, the algorithm is also fast in practice, as our experiments show.

The method used to achieve this consists of two parts: a framework that uses linear space and near-linear time, and a subroutine using linear space and near-linear time, which is called a near-linear number of times by the framework. The subroutine solves the \emph{bichromatic closest pair with dilation larger than $t$} problem. If there is an algorithm with a sublinear running time for this subproblem (possibly tailored to our specific scenario), then our framework immediately gives an asymptotically faster algorithm than is currently known. This situation is reminiscent to that of the minimum spanning tree, for which it is known that it is essentially equivalent to the \emph{bichromatic closest pair} problem.

The rest of the paper is organized as follows. In Section \ref{section:preliminaries} we review a number of well-known definitions, algorithms and results. In Section \ref{section:properties} we give the properties of the WSPD and the greedy spanner on which our algorithm is based.
In Section \ref{section:algorithm} we present our algorithm and analyse its running time and space usage. In Section \ref{section:practical} we discuss our optimizations of the algorithm. Finally, in Section \ref{section:results} we present our experimental results and compare it to other algorithms.

\section{Notation and preliminaries} \label{section:preliminaries}

Let $V$ be a set of points in $\Reals^d$, and let $t \in \Reals$ be the intended dilation ($1 < t$). Let $G = (V, E)$ be a graph on $V$. For two points $u, v \in V$, we denote the Euclidean distance between $u$ and $v$ by $|uv|$, and the distance in $G$ by $\delta_G(u, v)$. If the graph $G$ is clear from the context we will simply write $\delta(u,v)$. The \emph{dilation} of a pair of points is $t$ if $\delta(u, v) \leq t \cdot |uv|$. A graph $G$ has dilation $t$ if $t$ is an upper bound for the dilations of all pairs of points. In this case we say that $G$ is a \emph{$t$-spanner}. To simplify the analysis, we assume without loss of generality that $t<2$.

We will often say that two points $u, v \in V$ \emph{have a $t$-path} if their dilation is $t$. A pair of points \emph{is without $t$-path} if its dilation is not $t$. When we say a pair of points $(u, v)$ is the \emph{closest} or \emph{shortest} pair among some set of points, we mean that $|uv|$ is minimal among this set. We will talk about \emph{a Dijkstra computation} from a point $v$ by which we mean a single execution of the single-source shortest path algorithm known as Dijkstra's algorithm from $v$.

Consider the following algorithm that was introduced by Keil \cite{Keil:1988:ACE:61764.61787}:

\pagebreak

\begin{algorithm}{GreedySpannerOriginal}[V, t]{\label{algo:greedyorig}}
  $E \gets \emptyset$
  \\ \qfor every pair of distinct points $(u, v)$ in ascending order of $|uv|$
  \\ \qdo \qif $\delta_{(V, E)}(u, v) > t \cdot |uv|$
       \\ \qthen add $(u, v)$ to $E$
          \qendif
     \qendfor
  \\ \qreturn $E$
\end{algorithm}

Obviously, the result of this algorithm is a $t$-spanner for $V$. The resulting graph is called the \emph{greedy spanner} for $V$, for which we shall present a more efficient algorithm than the above.

We will make use of the Well-Separated Pair Decomposition, or WSPD for short, as introduced by Callahan and Kosaraju in \cite{Callahan95dealingwith,Callahan:1995:DMP:200836.200853}. A WSPD is parameterized with a separation constant $s \in \Reals$ with $s > 0$. This decomposition is a set of pairs of nonempty subsets of $V$. Let $m$ be the number of pairs in a decomposition. We can number the pairs, and denote every pair as $\{ A_i, B_i \}$ with $1 \leq i \leq m$. Let $u$ and $v$ be distinct points, then we say that $(u, v)$ is `in' a well-separated pair $\{ A_i, B_i \}$ if $u \in A_i$ and $v \in B_i$ or $v \in A_i$ and $u \in B_i$. A decomposition has the property that for every pair of distinct points $u$ and $v$, there is exactly one $i$ such that $(u,v)$ is in $\{ A_i, B_i \}$.

For two nonempty subsets $X_k$ and $X_l$ of $V$, we define $\min(X_k, X_l)$ to be the shortest distance between the two circles around the bounding boxes of $X_k$ and $X_l$ and $\max(X_k, X_l)$ to be the longest distance between these two circles. Let $diam(X_k)$ be the diameter of the circle around the bounding box of $X_k$. Let $\ell(X_k, X_l)$ be the distance between the centers of these two circles, also named the \emph{length} of this pair. For a given separation constant $s \in \Reals$ with $s>0$ as parameter for the WSPD, we require that all pairs in a WSPD are $s$-well-separated, that is, $\min(A_i, B_i) \geq s \cdot \max(diam(A_i), diam(B_i))$ for all $i$ with $1 \leq i \leq m$.

It is easy to see that $\max(X_k, X_l) \leq \min(X_k, X_l) + diam(X_k) + diam(X_l) \leq (1 + 2/s) \min(X_k, X_l)$. As $t < 2$ and as we will pick $s = \frac{2t}{t-1}$ later on, we have $s > 4$, and hence $\max(X_k, X_l) \leq 3/2 \min(X_k, X_l)$. Similarly, $\ell(X_k, X_l) \leq \min(X_k, X_l) + diam(X_k)/2 + diam(X_l)/2 \leq (1 + 1/s) \min(X_k, X_l)$ and hence $\ell(X_k, X_l) \leq 5/4 \min(X_k, X_l)$.

For any $V$ and any $s>0$, there exists a WSPD of size $m=O(s^d n)$ that can be computed in $O(n \log n + s^d n)$ time and can be represented in $O(s^d n)$ space \cite{Callahan95dealingwith}. Note that the above four values ($\min$, $\max$, $diam$ and $\ell$) can easily be precomputed for all pairs with no additional asymptotic overhead during the WSPD construction.

\section{Properties of the greedy spanner and the WSPD} \label{section:properties}

In this section we will give the idea behind the algorithm and present the properties of the greedy spanner and the WSPD that make it work. We assume we have a set of points $V$ of size $n$, an intended dilation $t$ with $1 < t < 2$ and a WSPD with separation factor $s = \frac{2t}{t-1}$, for which the pairs are numbered $\{A_i, B_i\}$ with $1 \leq i \leq m$, where $m = O(s^d n)$ is the number of pairs in the WSPD.

The idea behind the algorithm is to change the original greedy algorithm to work on well-separated pairs rather than edges. We will end up adding the edges in the same order as the greedy spanner. We maintain a set of 'candidate' edges for every well-separated pair such that the shortest of these candidates is the next edge that needs be added. We then recompute a candidate for some of these well-separated pairs. We use two requirements to decide on which pairs we perform a recomputation, that together ensure that we do not do too many recomputations, but also that we do not fail to update pairs which needed updating.

We now give the properties on which our algorithm is based.

\begin{observation}[{{Bose et al.~\cite[Observation 1]{BoseCFMS2010}}}] \label{observation:wspdgreedy}
 For every $i$ with $1 \leq i \leq m$, the greedy spanner includes at most one edge $(u, v)$ with $(u, v)$ in $\{ A_i, B_i \}$.
\end{observation}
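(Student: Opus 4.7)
The plan is to argue by contradiction. Suppose the greedy spanner contains two distinct edges $(u_1, v_1)$ and $(u_2, v_2)$ both in the WSPD pair $\{A_i, B_i\}$, with $u_1, u_2 \in A_i$ and $v_1, v_2 \in B_i$. Without loss of generality assume $|u_1 v_1| \leq |u_2 v_2|$, so the greedy algorithm considered and added $(u_1, v_1)$ first. The goal is to show that by the time $(u_2, v_2)$ is considered, there is already a $t$-path between $u_2$ and $v_2$, contradicting the fact that the algorithm adds $(u_2, v_2)$.

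The key geometric observation is that $u_1, u_2$ both lie in $A_i$, so $|u_1 u_2| \leq diam(A_i)$; by $s$-well-separatedness, $diam(A_i) \leq \min(A_i, B_i)/s \leq |u_2 v_2|/s$. Symmetrically $|v_1 v_2| \leq |u_2 v_2|/s$. Since $s>1$, both of these pairs are strictly shorter than $(u_2, v_2)$ and hence have already been processed by the greedy algorithm. A straightforward induction on the processing order shows that every pair shorter than $(u_2, v_2)$ either had its edge added or already had a $t$-path at the time of consideration; later additions of edges can only shrink distances, so both $(u_1, u_2)$ and $(v_1, v_2)$ admit $t$-paths in the current graph. (If endpoints happen to coincide, the corresponding distance is $0$ and no path is needed.)

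Now I concatenate: using the existing edge $(u_1, v_1)$ and the two $t$-paths, the current graph contains a walk of length at most
\[
t\,|u_1 u_2| + |u_1 v_1| + t\,|v_1 v_2| \;\leq\; \frac{t}{s}|u_2 v_2| + |u_2 v_2| + \frac{t}{s}|u_2 v_2| \;=\; \left(1+\tfrac{2t}{s}\right)|u_2 v_2|.
\]
Substituting the chosen $s = 2t/(t-1)$ gives $1 + 2t/s = 1 + (t-1) = t$, so $\delta(u_2, v_2) \leq t|u_2 v_2|$ at the moment $(u_2, v_2)$ is considered. The greedy algorithm would then skip $(u_2, v_2)$, contradicting the assumption that it was added.

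The only delicate point is the implicit induction establishing that shorter pairs already enjoy $t$-paths when $(u_2, v_2)$ is processed; once that is in place, everything else is arithmetic driven by the specific choice $s = 2t/(t-1)$, which is precisely the value that makes the bound $1 + 2t/s \leq t$ tight. That tightness is also what motivates the choice of $s$ elsewhere in the paper.
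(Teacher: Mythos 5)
Your proof is correct: the well-separatedness bound $|u_1u_2|\le diam(A_i)\le \min(A_i,B_i)/s\le |u_2v_2|/s$ (and symmetrically for $v_1,v_2$), the greedy invariant that all strictly shorter pairs already admit $t$-paths, and the concatenation giving $\delta(u_2,v_2)\le (1+\tfrac{2t}{s})|u_2v_2|=t|u_2v_2|$ for $s=\tfrac{2t}{t-1}$ are all sound, and you correctly handle coinciding endpoints and the WLOG orientation of the two edges within $\{A_i,B_i\}$. The difference from the paper is one of presentation rather than substance: the paper does not prove the observation at all, but delegates it to Bose et al.\ (ultimately to Lemma~9.1.2 of Narasimhan--Smid), spending its ``proof'' only on reconciling the diameter-versus-radius convention in the definition of well-separatedness. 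Your write-up is the self-contained version of that classical argument; what it buys is transparency --- in particular it makes explicit that $s=\tfrac{2t}{t-1}$ is exactly the threshold at which $1+\tfrac{2t}{s}\le t$, which the paper only asserts implicitly by its choice of separation constant --- at the cost of being longer than a citation. One small point worth making explicit if you polish this: the ``straightforward induction'' is really just the observation that immediately after the greedy algorithm processes a pair $(a,b)$ we have $\delta(a,b)\le t|ab|$ (either the edge was added or a $t$-path already existed), and that $\delta$ is nonincreasing as edges are added; since $s>1$ forces $|u_1u_2|$ and $|v_1v_2|$ to be strictly smaller than $|u_2v_2|$, both pairs were processed strictly earlier regardless of how ties are broken.
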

\begin{proof}
This observation is not fully identical to \cite{BoseCFMS2010} as our definition of well-separatedness is slightly different than theirs. However, their proof uses only properties of their Lemma 2, which still holds true using our definitions as proven in \ref{section:preliminaries}. Their Lemma 2 is almost Lemma 9.1.2 in \cite{Narasimhan:2007:GSN:1208237}, whose definitions of well-separatedness is near identical to ours, except that they use radii rather than diameters and hence have different constants.
\end{proof}

An immediate corollary is:

\begin{observation}[{{Bose et al.~\cite[Corollary 1]{BoseCFMS2010}}}] \label{observation:greedyedges}
 The greedy spanner contains at most $O\left(\frac{1}{(t-1)^d} n\right)$ edges.
\end{observation}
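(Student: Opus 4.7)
The plan is to combine Observation~\ref{observation:wspdgreedy} with the standard size bound on the WSPD stated in Section~\ref{section:preliminaries}. Since every pair of distinct points of $V$ is contained in exactly one well-separated pair, every edge of the greedy spanner is contained in exactly one well-separated pair $\{A_i, B_i\}$. By Observation~\ref{observation:wspdgreedy}, each such well-separated pair contributes at most one edge to the greedy spanner, so the total number of edges is at most $m$.

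Next I would bound $m$ in terms of $t$. The paper fixes $s = \tfrac{2t}{t-1}$ and recalls that a WSPD of separation $s$ has size $m = O(s^d n)$. Substituting and using $1 < t < 2$ to absorb the constant in the numerator gives
\[
m \;=\; O\!\left(\left(\frac{2t}{t-1}\right)^{\!d} n\right) \;=\; O\!\left(\frac{1}{(t-1)^d}\, n\right),
\]
which is the desired bound.

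The only subtle point is that Observation~\ref{observation:wspdgreedy} is stated with respect to the paper's slightly modified notion of well-separatedness, but the remark following its statement already justifies this, so no additional work is needed. In other words, there is no real obstacle here: the corollary is just Observation~\ref{observation:wspdgreedy} plus the Callahan--Kosaraju size bound, with $s$ plugged in.
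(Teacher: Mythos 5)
Your proposal is correct and matches the paper's intent exactly: the paper gives no explicit proof, merely calling the bound ``an immediate corollary'' of Observation~\ref{observation:wspdgreedy}, and the argument it has in mind is precisely yours --- each of the $m = O(s^d n)$ well-separated pairs contributes at most one greedy edge, and $s = \frac{2t}{t-1}$ with $t<2$ gives $s^d = O\bigl(\frac{1}{(t-1)^d}\bigr)$.
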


\begin{lemma} \label{lemma:dijkstra}
Let $E$ be some edge set for $V$. For every $i$ with $1 \leq i \leq m$, we can compute the closest pair of points $(u, v) \in A_i \times B_i$ among the pairs of points with dilation larger than $t$ in $G=(V, E)$ in $O(\min(|A_i|, |B_i|)(|V|\log|V| + |E|))$ time and $O(|V|)$ space.
\end{lemma}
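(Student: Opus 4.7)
The plan is to reduce the problem to repeated single-source shortest-path computations, choosing the source side of the WSPD pair to be the smaller of $A_i$ and $B_i$ so that the $\min(|A_i|,|B_i|)$ factor in the bound falls out naturally. Without loss of generality, assume $|A_i| \le |B_i|$; otherwise swap the roles of $A_i$ and $B_i$ (the problem statement is symmetric).

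The algorithm I would use maintains a single variable storing the current best candidate pair $(u^\star,v^\star)$ together with its Euclidean length. For each source $u \in A_i$ in turn, I would run Dijkstra's algorithm from $u$ in $G=(V,E)$, obtaining the array of distances $\delta_G(u,\cdot)$. I would then scan the points $v \in B_i$ once, and whenever $\delta_G(u,v) > t \cdot |uv|$ and $|uv|$ is smaller than the length of the current best candidate, update $(u^\star,v^\star) \gets (u,v)$. After discarding the distance array, I would move on to the next source in $A_i$ and repeat. At the end, $(u^\star,v^\star)$ is returned (or ``none'' if no pair with dilation exceeding $t$ was ever encountered).

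Correctness is immediate: the outer loop considers every pair in $A_i \times B_i$ exactly once, the dilation test $\delta_G(u,v) > t\cdot |uv|$ is the defining condition of the set over which we are minimising, and the running minimum keeps the Euclidean-shortest such pair. For the running time, each Dijkstra invocation on $G$ costs $O(|V|\log|V| + |E|)$ using a standard binary-heap implementation; the scan over $B_i$ adds $O(|B_i|) \subseteq O(|V|)$, which is absorbed. Multiplying by the $|A_i| = \min(|A_i|,|B_i|)$ iterations gives the stated $O(\min(|A_i|,|B_i|)(|V|\log|V|+|E|))$ bound.

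The one place that requires attention is the space bound. Dijkstra from a single source uses $O(|V|)$ for the priority queue, the distance array, and a visited flag array; the graph itself takes $O(|V|+|E|)$; the sets $A_i,B_i$ and the current best pair take $O(|V|)$; since $|E|$ is part of the input representation, the extra working space beyond the input is $O(|V|)$. The key design choice that keeps us within this budget is never to materialise all pairwise distances or the full matrix $A_i \times B_i$ of candidate edges: the running minimum is updated on the fly, and the distance array from the previous source is overwritten by the next Dijkstra call. This is the only subtle point in the proof; everything else is routine.
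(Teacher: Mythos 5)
Your proposal is correct and follows essentially the same approach as the paper: run Dijkstra from each point of the smaller side, maintain a running best pair, and reuse the $O(|V|)$ working space across computations. The only cosmetic difference is that the paper checks membership in $B_i$ via a precomputed boolean array during each Dijkstra run rather than scanning $B_i$ afterwards, which changes nothing in the bounds.
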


\begin{proof}
Assume without loss of generality that $|A_i| \leq |B_i|$. We perform a Dijkstra computation for every point $a \in A_i$, maintaining the closest point in $|B_i|$ such that its dilation with respect to $a$ is larger than $t$ over all these computations. To check whether a point that is considered by the Dijkstra computation is in $|B_i|$, we precompute a boolean array of size $|V|$, in which points in $|B_i|$ are marked as true and the rest as false. This costs $O(|V|)$ space, $O(|V|)$ time and achieves a constant lookup time. A Dijkstra computation takes $O(|V|\log|V|+|E|)$ time and $O(|V|)$ space, but this space can be reused between computations.
\end{proof}

\begin{fact}[{{Callahan~\cite[Chapter 4.5]{Callahan95dealingwith}}}]\label{fact:manyone}
$\sum_{i=1}^m \min(|A_i|, |B_i|) = O(s^d n \log n)$
\end{fact}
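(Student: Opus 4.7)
My plan exploits the fair-split tree $T$ underlying the Callahan--Kosaraju WSPD construction: every WSPD pair $\{A_i,B_i\}$ is really an unordered pair of tree nodes $(u_i,v_i)$, and $A_i$, $B_i$ are the points at the leaves of the respective subtrees. I would first apply a charging step: orient each pair so that $|A_i|\le|B_i|$ and distribute its contribution $\min(|A_i|,|B_i|)=|A_i|$ among the points of the smaller side. This reduces the claim to showing $\sum_{p\in V}c(p)=O(s^d n\log n)$, where $c(p)$ counts the WSPD pairs in which $p$ lies on the smaller side.

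Second, I would prove a per-node packing bound: for any tree node $u$, the number of WSPD pairs $\{u,v\}$ in which $u$ is the smaller side is at most $O(s^d)$. Well-separatedness together with $s>4$ forces the two sides of any pair to have bounding-box diameters within a constant factor of each other, and the condition $\min(A_i,B_i)\ge s\cdot\max(diam(A_i),diam(B_i))$ spaces the potential partners $v$ of a fixed $u$ far enough apart that a standard volume argument in $\Reals^d$ permits only $O(s^d)$ of them.

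The main obstacle is lifting this per-node bound to $c(p)=O(s^d\log n)$: the ancestors of $p$ in $T$ can lie along a path much longer than $\log n$ for clustered inputs, so the naive bound $c(p)=O(s^d\cdot\mathrm{depth}(p))$ is too weak. To close this gap I would partition the pairs counted by $c(p)$ by their length scale $\lfloor\log_2\ell_i\rfloor$: at each fixed scale the diameter of the smaller-side bounding box is pinned to within a constant factor, so only $O(1)$ ancestors of $p$ are eligible as $u_i$, and the node-packing bound then gives $O(s^d)$ pairs per scale. The remaining delicate step is amortizing the number of distinct scales. Walking up the ancestor chain of $p$, each new scale at which an ancestor acts as the smaller side must strictly enlarge the partner subtree, limiting the count to $O(\log n)$ scales. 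Combining $O(s^d)$ pairs per scale with $O(\log n)$ scales per point yields $c(p)=O(s^d\log n)$, and summation over $p\in V$ completes the proof.
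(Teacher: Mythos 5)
Your opening reduction---charging $\min(|A_i|,|B_i|)$ to the points of the smaller side and bounding $c(p)$, the number of pairs in which $p$ lies on the smaller side---is fine, but note that the paper does not reprove this bound at all: it observes that Callahan's many-one realization is obtained by splitting each canonical pair into $\min(|A_i|,|B_i|)$ singleton pairs and simply cites his $O(s^d n\log n)$ bound on the size of that realization. You are therefore attempting to reprove the hard part of Callahan's result from scratch, and several of your intermediate claims fail. Well-separatedness, $\min(A_i,B_i)\ge s\cdot\max(diam(A_i),diam(B_i))$, only \emph{upper}-bounds both diameters by $\min(A_i,B_i)/s$; it gives no lower bound and does not force the two sides to have comparable diameters: one side can be a single point while the other has diameter $\Theta(\min(A_i,B_i)/s)$. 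For the same reason, at a fixed length scale $\ell$ the smaller-side diameter is not ``pinned to within a constant factor,'' so far more than $O(1)$ ancestors of $p$ can be eligible: a long chain of nested split-tree nodes containing $p$, all with tiny bounding boxes, can each be the smaller side of a pair of length about $\ell$. (Since all these nodes contain $p$, their partners are pairwise disjoint by the uniqueness property of the WSPD, so a packing argument on the \emph{partners} could still give $O(s^d)$ pairs per point per scale---but that is a different argument from the one you give.) Your unconditional per-node bound in the second step is also false as stated: a single node can participate in pairs at $\Theta(\log\Phi)$ distinct scales, $\Phi$ the spread, and hence in $\omega(s^d)$ pairs.

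The decisive gap is the final amortization, which is exactly where the $\log n$ factor must come from. ``Each new scale must strictly enlarge the partner subtree'' bounds the number of relevant scales by $O(n)$, not $O(\log n)$: strict growth adds at least one point per step, and to get $O(\log n)$ you would need the cardinality of the partner (or of the smaller side) to at least \emph{double} each time a new scale appears. Nothing in your argument forces this---the smaller side containing $p$ can remain a singleton, with partners of size one, across arbitrarily many scales. It is also not obvious that the pointwise bound $c(p)=O(s^d\log n)$ you are aiming for holds uniformly rather than merely on average; Callahan's $O(s^d n\log n)$ is a bound on the sum, proved by a global argument over the split tree. Until you can justify the $O(\log n)$ bound on the number of contributing scales per point (or retreat to a summed version of the claim), the proposal does not establish the fact.
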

\begin{proof}
This is not stated explicitly in \cite{Callahan95dealingwith}, but it is a direct consequence of the construction of the many-one realization: it is proven that the many-one realization consists of $O(s^d n \log n)$ pairs, and the construction splits every pair in the canonical realization into $\min(|A_i|, |B_i|)$ pairs, so the lemma follows.
\end{proof}

\begin{observation} \label{observation:faraway}
Let $E$ be some edge set for $V$. Let $(a, b) \in E$. Let $c \in V$ and $d \in V$ be points such that $|ac|, |ad|, |bc|, |bd| > t|cd|$. Then any $t$-path between $c$ and $d$ will not use the edge $(a, b)$.
\end{observation}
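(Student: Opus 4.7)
The plan is a direct contradiction argument. Suppose some path $P$ from $c$ to $d$ in $G=(V,E)$ uses the edge $(a,b)$, and assume for contradiction that $P$ is a $t$-path, i.e.\ its total length is at most $t|cd|$. Since $P$ traverses $(a,b)$, it must first arrive at one of the endpoints of this edge; by symmetry between $a$ and $b$, assume $P$ reaches $a$ before $b$. Then $P$ decomposes into a subpath $P_1$ from $c$ to $a$, the single edge $(a,b)$, and a subpath $P_2$ from $b$ to $d$, so that the length of $P$ equals $|P_1| + |ab| + |P_2|$.

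Next I would use the standard fact that for any walk between two points in a Euclidean graph the total Euclidean length of its edges is an upper bound for the straight-line distance: by iterated triangle inequality, $|P_1| \geq |ca|$ and $|P_2| \geq |bd|$. Plugging into the hypothesis gives
\[
  t|cd| \;\geq\; |P| \;=\; |P_1| + |ab| + |P_2| \;\geq\; |ca| + |ab| + |bd| \;\geq\; |ca|.
\]
But the hypothesis of the observation asserts $|ca| > t|cd|$, a contradiction. The symmetric case (where $P$ reaches $b$ first) is identical after swapping the roles of $a$ and $b$ and using $|cb| > t|cd|$ instead.

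There is essentially no hard step here: the argument only relies on the triangle inequality for Euclidean distances and on the fact that all four of $|ac|, |ad|, |bc|, |bd|$ exceed $t|cd|$, which lets us dispose of both orderings of $a,b$ along $P$ uniformly. The only thing worth being careful about is to state explicitly that we are using the Euclidean weighting of edges (so that the length along $P_1$ dominates $|ca|$), since the observation is stated without re-invoking the ambient metric.
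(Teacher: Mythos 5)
Your proof is correct and is essentially the paper's own argument: the paper's one-line proof observes that ``just getting to either $a$ or $b$ is already longer than allowed for a $t$-path,'' and your write-up simply formalizes this by splitting the path at the edge $(a,b)$ and applying the triangle inequality to the prefix. No further comment is needed.
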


\begin{proof}
This directly follows from the fact that $c$ and $d$ are so far away from $a$ and $b$ that just getting to either $a$ or $b$ is already longer than allowed for a $t$-path.
\end{proof}

\begin{fact} \label{fact:packing}
Let $\gamma$ and $\ell$ be positive real numbers, and let $\{ A_i, B_i \}$ be a well-separated pair in the WSPD with length $\ell(A_i, B_i) = \ell$. The number of well-separated pairs $\{ A_i', B_i' \}$ such that the length of the pair is in the interval $[\ell/2,2\ell]$ and at least one of $R(A_i')$ and $R(B_i')$ is within distance $\gamma \ell$ of either $R(A_i)$ or $R(B_i)$ is less than or equal to $c_{s \gamma} = O\left(s^d(1+\gamma s)^d\right)$.
\end{fact}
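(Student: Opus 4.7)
The plan is a two-step packing argument in $\mathbb{R}^d$, whose two sub-bounds multiply to give the claimed count.

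First, I localize the qualifying pairs. By symmetry we lose only a factor of $4$ by restricting to pairs $\{A', B'\}$ for which $R(A')$ lies within distance $\gamma \ell$ of $R(A_i)$. Since $\ell(A', B') \leq 2\ell$, the inequalities between $\min$, $\max$, $\mathrm{diam}$ and $\ell$ derived in Section~\ref{section:preliminaries} combined with $s$-well-separatedness give $\mathrm{diam}(R(A')), \mathrm{diam}(R(B')) \leq 2\ell/s$ and center-to-center distance at most $(1+1/s)\cdot 2\ell$. Therefore $R(A')$ is contained in a ball $\mathcal{B}_A$ of radius $\gamma\ell + O(\ell/s)$ around the centre of $R(A_i)$, while $R(B')$ is contained in a ball of radius $O((1+\gamma)\ell)$ around the same centre.

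Second, I count via a grid on $\mathcal{B}_A$ with cells of side $\Theta(\ell/s)$, which uses $O((1+\gamma s)^d)$ cells. Each qualifying pair is charged to the cell containing the centre of $R(A')$. What remains is to show that at most $O(s^d)$ pairs can be charged to any one cell $C$. The sides $R(B')$ of such pairs all lie in a common ball of radius $O(\ell)$, and $s$-well-separatedness forces each $R(B')$ to contain a ``core'' region of diameter $\Omega(\ell/s)$ that is essentially disjoint from the cores of the other $R(B')$'s --- otherwise the uniqueness of the WSPD representation of a point-pair would force two of the $\{A', B'\}$'s to coincide. A volume comparison in $\mathbb{R}^d$ then gives $O(s^d)$ pairs per cell, and multiplying the two sub-bounds yields $c_{s\gamma} = O(s^d(1+\gamma s)^d)$.

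The main obstacle is this final ``$O(s^d)$ pairs per cell'' sub-bound. It is essentially the standard WSPD packing fact --- a fixed point lies in only $O(s^d)$ WSPD pairs at a single length scale --- and is not immediate from $s$-well-separatedness alone, because WSPD regions belonging to different pairs may overlap freely. The cleanest justification is the volume/core-disjointness argument sketched above, which must be made precise on both sides of the pairs so that all four symmetric cases (based on which of $R(A_i), R(B_i)$ is near which of $R(A'), R(B')$) are handled uniformly, with the resulting constant absorbed into the $O(\cdot)$.
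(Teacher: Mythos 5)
Your outer counting step is sound: restricting to one of the four symmetric configurations at the cost of a factor $4$, covering the relevant ball of radius $\gamma\ell+O(\ell/s)$ by $O((1+\gamma s)^d)$ grid cells of side $\Theta(\ell/s)$, and charging each qualifying pair to a cell is the right shape of argument. The gap is in the step you yourself single out as the crux, the $O(s^d)$-pairs-per-cell bound, and the justification you sketch for it does not go through. First, $s$-well-separatedness only gives an \emph{upper} bound $\mathrm{diam}(R(B'))\leq \ell(A',B')/s$; it gives no lower bound, so there is no guaranteed ``core of diameter $\Omega(\ell/s)$'' --- $B'$ may be a single point. Second, uniqueness of the WSPD representation of a point pair only forbids two distinct pairs from intersecting on \emph{both} sides as point sets; it says nothing about the regions $R(B')$ being geometrically disjoint, so no volume comparison is available. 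Indeed, the statement is false for an arbitrary family of $s$-well-separated pairs covering each point pair exactly once: take two clusters $P,Q$ of negligible diameter at distance $\ell$ and the decomposition consisting of all $|P|\cdot|Q|$ singleton pairs $\{\{p\},\{q\}\}$ (plus whatever is needed internally to $P$ and $Q$); each such pair is trivially $s$-well-separated, all have length about $\ell$, all lie within distance $\gamma\ell$ of one another, and there are $\Theta(n^2)$ of them.

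What your sketch is missing is the structural input from the split-tree construction of Callahan and Kosaraju, which is where the paper gets it: the paper deduces the Fact from Lemma~11.3.4 of \cite{Narasimhan:2007:GSN:1208237}, applied twice to cover the interval $[\ell/2,2\ell]$ after reconciling the radius-versus-diameter convention (their $s$ is our $2s$). That lemma's proof supplies exactly the two ingredients you lack: (i) a pair $\{A',B'\}$ is generated only when the \emph{parent} of $A'$ or $B'$ in the split tree fails the separation test, which forces the parent's box to have diameter $\Omega(\ell(A',B')/s)$ and yields the lower bound on the ``core''; and (ii) bounding boxes of split-tree nodes are nested or disjoint because the split tree recursively partitions a hyperrectangle, which is what licenses the packing/volume comparison. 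If you want a self-contained proof you must import these two properties of the specific WSPD being used; they cannot be recovered from well-separatedness and uniqueness alone.
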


\begin{proof}
This follows easily from a very similar statement, Lemma 11.3.4 in \cite{Narasimhan:2007:GSN:1208237}. We first note the differences between their definitions and ours. Their statement involves \emph{dumbbells}, but these are really the same as our well-separated pairs. Also, their definition of well-separatedness is different from ours in that they use the radius and we the diameter. We can easily amend this by observing that a WSPD with separation factor $s$ using our definitions is identical to a WSPD with separation factor $2s$ using their definitions. This means our constant $c_{s \gamma}$ is larger than their constant because our $s$ is doubled, but this is asymptotically irrelevant.

Now we shall see how our statement follows from theirs. Their interval is $[\ell, 2\ell]$ and they allow the length of $\{ A_i', B_i' \}$ to vary in the interval $[\ell, 2\ell]$. Let $\ell' = \ell(A_i', B_i')$, then we can obtain our fact by invoking their lemma twice, first by setting $\ell = \ell'$ (resulting in an upper bound on the number such pairs $n_1$) and then by setting $\ell = \ell'/2$ (resulting in an upper bound on the number of such pairs $n_2$). This counts the number of pairs with lengths in the interval $[\ell/2, 2\ell]$. These are exactly the pairs we are interested in, and hence the constant we obtain is $n_1 + n_2$, thus proving the fact.
\end{proof}

This concludes the theoretical foundations of the algorithm. We will now present the algorithm and analyze its running time.

\section{Algorithm} \label{section:algorithm}

We will now describe algorithm \ref{algo:Greedy} in detail. It first computes the WSPD for $V$ with $s = \frac{2t}{t-1}$ and sorts the resulting pairs according to their smallest distance $\min(A_i, B_i)$. It then alternates between calling the \qproc{FillQueue} procedure that attempts to add well-separated pairs to a priority queue $Q$, and removing an element from $Q$ and adding a corresponding edge to $E$. If $Q$ is empty after a call to \qproc{FillQueue}, the algorithm terminates and returns $E$.

We assume we have a procedure \qproc[i]{ClosestPair} that for the $i$th well-separated pair computes the closest pair of points without $t$-path in the graph computed so far, as presented in Lemma \ref{lemma:dijkstra}, and returns this pair, or returns \qnil if no such pair exists. For the priority queue $Q$, we let $\min(Q)$ denote the value of the key of the minimum of $Q$. Recall that $m = O(s^d n)$ denotes the number of well-separated pairs in the WSPD that we compute in the algorithm.

We maintain an index $i$ into the sorted list of well-separated pairs. It points to the smallest untreated well-separated pair -- we treat the pairs in order of $\min(A_i, B_i)$ in the \qproc{FillQueue} procedure. When we treat a pair $\{ A_i, B_i \}$, we call \qproc[i]{ClosestPair} on it, and if it returns a pair $(u, v)$, we add it to $Q$ with key $|uv|$. We link entries in the queue, its corresponding pair $\{ A_i, B_i \}$ and $(u, v)$ together so they can quickly be requested. We stop treating pairs and return from the procedure if we have either treated all pairs, or if $\min(A_i, B_i)$ is larger than the key of the minimal entry in $Q$ (if it exists).

After extracting a pair of points $(u, v)$ from $Q$, we add it to $E$. Then, we update the information in $Q$: for every pair $\{ A_j, B_j \}$ having an entry in $Q$ for which either bounding box is at most $t|uv|$ away from $\{ A_i, B_i \}$, we recompute \qproc[j]{ClosestPair} and updates its entry in $Q$ as follows. If the recomputation returns \qnil, we remove its entry from $Q$. If it returns a pair $(u', v')$, we link the entry of $j$ in $Q$ with this new pair and we increase the key of its entry to $|u'v'|$.

\begin{algorithm}{FillQueue}[Q, i]{}
  \qwhile \label{algo:guard} $i \leq m$, and either $\min(A_i B_i) \leq \min(Q)$ or $Q$ is empty
  \\ \qdo $p \gets $ \qproc[i]{ClosestPair}
       \\ \qif $p$ is not \qnil, but a pair $(u, v)$
       \\ \qthen add $(u, v)$ to $Q$ with key $|uv|$, and associate this entry with $\{ A_i, B_i \}$
          \qendif
       \\ $i \gets i + 1$
  \qend
\end{algorithm}

\begin{algorithm}{GreedySpanner}[V, t]{\label{algo:Greedy}}
  Compute the WSPD $W$ for $V$ with $s = \frac{2t}{t-1}$, and let $\{ A_i, B_i \}$ be the resulting pairs, $1 \leq i \leq m$
  \\ Sort the pairs $\{ A_i, B_i \}$ according to $\min(A_i, B_i)$
  \\ $E \gets \emptyset$
  \\ $Q \gets $ an empty priority queue
  \\ $i \gets 1$
  \\ \label{algo:fill} \qproc[Q, i]{FillQueue}
  \\ \qwhile $Q$ is not empty
  \\ \qdo extract the minimum from Q, let this be $(u, v)$
       \\ \label{algo:add} add $(u, v)$ to $E$
       \\ \qfor \label{algo:loop} all pairs $\{ A_j, B_j \}$ with an entry in $Q$ for which either bounding box is at most $t|uv|$ away from either $u$ or $v$
       \\ \qdo $p \gets $ \qproc[j]{ClosestPair}
            \\ \label{algo:update1} \qif $p$ is \qnil, remove the entry in $Q$ associated with $\{ A_j, B_j \}$ from $Q$ \qendif
            \\ \label{algo:update2} \qif $p$ is a pair $(u', v')$, update the entry in $Q$ associated with $\{ A_j, B_j \}$ to contain $(u', v')$ and increase its key to $|u'v'|$ \qendif
          \qendfor
       \\ \qproc[Q, i]{FillQueue}
     \qend
  \\ \qreturn $E$
\end{algorithm}

\pagebreak

We now prove correctness and a bound on the running time of the algorithm.

\begin{theorem} \label{theorem:correctness}
Algorithm \qproc{GreedySpanner} computes the greedy spanner for dilation $t$.
\end{theorem}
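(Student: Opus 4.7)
My plan is induction on the number of edges added to $E$: after the $k$-th extraction from $Q$, $E$ equals the first $k$ edges that \qproc{GreedySpannerOriginal} would add in ascending order of length. Two invariants drive the induction. The \emph{candidate invariant} says that every $\{A_j, B_j\}$ with an entry in $Q$ is associated with the closest pair $(c,d) \in A_j \times B_j$ whose dilation in the current $(V, E)$ exceeds $t$. The \emph{size invariant} says $\min(A_j, B_j) \leq \min(Q)$ for every $\{A_j, B_j\}$ in $Q$; this follows because \qproc{FillQueue} processes pairs in ascending order of $\min(A_j, B_j)$ and admits one only when $\min(A_j, B_j) \leq \min(Q)$, any new entry's key $|c_j d_j|$ satisfies $|c_j d_j| \geq \min(A_j, B_j)$, and both extraction and re-keying only increase $\min(Q)$.

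Given both invariants, the extracted minimum $(u,v)$ is the globally shortest pair in $V \times V$ whose dilation in $(V,E)$ still exceeds $t$: treated pairs are handled by the candidate invariant, while for any untreated $\{A_i, B_i\}$ the termination guard of \qproc{FillQueue} forces $\min(A_i, B_i) > \min(Q) = |uv|$, so every pair inside $A_i \times B_i$ has length exceeding $|uv|$. Hence $(u,v)$ is exactly the next edge of the greedy spanner, and appending it to $E$ preserves the outer hypothesis.

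The technical heart is restoring the candidate invariant after $(u,v)$ is appended to $E$. Because adding an edge can only create $t$-paths, a stored candidate $(c,d)$ can only grow longer or disappear, and it changes at all only if $(c,d)$ acquires a $t$-path through $(u,v)$. The line~\ref{algo:loop} test compares bounding-box distances to $t|uv|$, whereas the natural bound coming out of Observation~\ref{observation:faraway} is $t|cd|$; I would bridge this gap by contradiction. Suppose the test fails for $\{A_j, B_j\}$, so $|cu|, |cv|, |du|, |dv| > t|uv|$, yet $(c,d)$ becomes $t$-connected via $(u,v)$. The inequality $|cu| + |uv| + |vd| \leq t|cd|$ combined with $|cu|, |vd| > t|uv|$ gives $(2+1/t)|uv| < |cd|$. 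But the size invariant together with the $s$-well-separation estimate $|cd| \leq (1+2/s)\min(A_j, B_j) \leq (3/2)|uv|$ (using $s > 4$ from Section~\ref{section:preliminaries}) gives $|cd| \leq (3/2)|uv|$, contradicting $2+1/t > 3/2$. The subsequent \qproc{FillQueue} call then admits any newly-relevant pairs, restoring the invariants before the next iteration.

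The main obstacle will be this last contradiction: faithfully carrying the size invariant through fill, extract, and re-key operations, and checking that the slack $(1+2/s) \leq 3/2$ left by the well-separation constants genuinely beats $2 + 1/t$ for every $t \in (1, 2)$. Everything else — monotonicity of candidates under edge insertion, the untreated-pair bound from the \qproc{FillQueue} guard, and the identification of the extracted minimum with the next greedy edge — is routine once the invariants are in hand.
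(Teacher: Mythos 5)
Your proposal is correct and follows essentially the same route as the paper's proof: reduce correctness to showing that every extracted minimum is the globally closest pair without a $t$-path, justify the \qproc{FillQueue} guard by noting that an untreated pair with $\min(A_i,B_i) > \min(Q)$ cannot supply a shorter candidate, and justify skipping far-away pairs on line~\ref{algo:loop} via Observation~\ref{observation:faraway}. Your bridging argument for that last step --- using the size invariant to get $|cd| \leq \tfrac{3}{2}|uv|$ and deriving the contradiction with $|cd| > (2+1/t)|uv|$ --- is in fact more careful than the paper's own text, which calls the $t|uv|$-threshold test a ``direct consequence'' of the observation even though the observation's hypothesis is phrased with threshold $t|cd|$; the extra step you supply is genuinely needed and is correct.
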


\begin{proof}
We will prove that if the algorithm adds $(u, v)$ to $E$ on line \ref{algo:add}, then $(u, v)$ is the closest pair of points without a $t$-path between them. The greedy spanner consists of exactly these edges and hence this is sufficient to prove the theorem.

It is obvious that if we call \qproc[i]{ClosestPair} on every well-separated pair and take the closest pair of the non-\qnil results, then that would be the closest pair of points without a $t$-path between them. Our algorithm essentially does this, except it does not recalculate \qproc[i]{ClosestPair} for every pair after every added edge, but only for specific pairs. We will prove that the calls it does not make do not change the values in $Q$. Our first optimization is that if a call \qproc[i]{ClosestPair} returns \qnil it will always return \qnil, so we need not call \qproc[i]{ClosestPair} again, which is therefore a valid optimization.

The restriction `for which either bounding box is at most $t|uv|$ away from either $u$ or $v$' from the for-loop on line \ref{algo:loop} is the second optimization. Its validity is a direct consequence of Observation \ref{observation:faraway}: all pairs of points in such well-separated pairs are too far away to use the newly-added edge to gain a $t$-path. Therefore re-running \qproc[i]{ClosestPair} and performing lines \ref{algo:update1} and \ref{algo:update2} will not change any entries in $Q$ as claimed.

As $\min(A_i B_i)$ is a lower bound on the minimal distance between any two points $(a, b)$ in $\{ A_i, B_i \}$, it immediately follows that calling \qproc[Q, i]{FillQueue} on a pair $\{ A_i, B_i \}$ with $\min(A_i B_i) > \min(Q)$ cannot possibly yield a pair that can cause $\min(Q)$ to become smaller. As the pairs are treated in order of $\min(A_i B_i)$, this means the optimization that is the condition on line~\ref{algo:guard} in~\qproc[Q, i]{FillQueue} is a valid optimization. This proves the theorem.
\end{proof}

We will now analyze the running time and space usage of the algorithm. We will use the observations in Section \ref{section:properties} to bound the amount of work done by the algorithm.

\begin{lemma} \label{lemma:closestpaircount}
For any well-separated pair $\{ A_i, B_i \}$ {\normalfont($1 \leq i \leq m$)}, the number of times \qproc[i]{ClosestPair} is called is at most $1 + c_{s t}$.
\end{lemma}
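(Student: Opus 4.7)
The plan is to partition the calls to \qproc[i]{ClosestPair} into two kinds: the at-most-one call that may occur inside \qproc{FillQueue} when its index variable reaches the pair $\{A_i,B_i\}$, and the calls triggered inside the for-loop on line~\ref{algo:loop} of the main loop. The first kind contributes at most one call, so the task reduces to bounding the number of second-kind calls by $c_{st}$ using Fact~\ref{fact:packing}.

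For the second kind, each call is triggered by the extraction of some edge $(u,v)\in A_j\times B_j$ (for some well-separated pair $j\ne i$) that satisfies the proximity condition on line~\ref{algo:loop}. Each pair $j$ can trigger at most one such call: it is processed by \qproc{FillQueue} at most once since the index variable only advances, and by Observation~\ref{observation:wspdgreedy} at most one edge of the greedy spanner is associated with pair $j$, so at most one extraction ever removes an entry of pair $j$ from $Q$. It therefore suffices to bound the number of distinct such pairs $j$ by $c_{st}$.

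To invoke Fact~\ref{fact:packing} with reference length $\ell=\ell(A_i,B_i)$ we need $\ell(A_j,B_j)\in[\ell/2,2\ell]$ and one of $R(A_j),R(B_j)$ within $O(t)\cdot\ell$ of $R(A_i)$ or $R(B_i)$. The distance condition comes directly from the trigger: $u\in R(A_j)$ lies within $t|uv|$ of $R(A_i)$ or $R(B_i)$. The upper bound on $|uv|$ is immediate, since $(u,v)$ is the minimum of $Q$ at extraction and pair $i$ has an entry with key $|u'v'|$ for some $(u',v')\in A_i\times B_i$, so $|uv|\le|u'v'|\le\max(A_i,B_i)\le(3/2)\ell$. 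The main obstacle is the matching lower bound $|uv|\ge(4/5)\ell$.

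For this lower bound I argue that from the moment pair $i$ has been processed by \qproc{FillQueue}, $\min(Q)$ never drops below $\min(A_i,B_i)$: at the moment of processing, the guard of \qproc{FillQueue} forces $\min(Q)\ge\min(A_i,B_i)$; each extract-min only raises $\min(Q)$; every pair $j'$ processed thereafter has $\min(A_{j'},B_{j'})\ge\min(A_i,B_i)$ by the sorted order, so the key it contributes is at least $\min(A_i,B_i)$; and the updates on lines~\ref{algo:update1}--\ref{algo:update2} only increase existing keys, because adding edges to $E$ can only create, not destroy, $t$-paths. A second-kind trigger for pair $i$ requires pair $i$ to still be in $Q$, hence to have been processed, so the triggering extraction satisfies $|uv|=\min(Q)\ge\min(A_i,B_i)\ge(4/5)\ell$. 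Combining the two bounds puts $|uv|$ within a small constant factor of $\ell$, and since $|uv|\in[\min(A_j,B_j),\max(A_j,B_j)]$ with $\max\le(3/2)\min$ we obtain $\ell(A_j,B_j)\in[\ell/2,2\ell]$. The proximity condition then gives an $R$-to-$R$ distance of at most $t|uv|=O(t)\cdot\ell$, so Fact~\ref{fact:packing} (with $\gamma=O(t)$) bounds the number of such pairs $j$ by $c_{st}$, yielding the claimed $1+c_{st}$.
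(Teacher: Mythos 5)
Your proof is correct and follows essentially the same route as the paper's: one call is charged to the \qproc{FillQueue} pass, and the edge-triggered calls are bounded by deriving $\min(A_j,B_j)\le|uv|\le\max(A_i,B_i)$ and $\min(A_i,B_i)\le|uv|\le\max(A_j,B_j)$, converting these to $\ell(A_j,B_j)\in[\ell(A_i,B_i)/2,\,2\ell(A_i,B_i)]$, and applying Fact~\ref{fact:packing}. The only difference is that you explicitly justify two points the paper asserts without argument, namely the invariant that $\min(Q)\ge\min(A_i,B_i)$ once pair $i$ has been treated and the fact that each pair $j$ can trigger at most one recomputation of pair $i$ (via Observation~\ref{observation:wspdgreedy}).
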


\begin{proof}
\qproc[i]{ClosestPair} is called once for every $i$ in the \qproc{FillQueue} procedure. \qproc[i]{ClosestPair} may also be called after an edge is added to the graph. We will show that if a well-separated pair $\{ A_j, B_j \}$ causes \qproc[i]{ClosestPair} to be called, then $\ell(A_j, B_j) \in [\ell(A_i, B_i)/2, 2\ell(A_i, B_i)]$. Then, by the condition of line \ref{algo:loop}, the collection of pairs that call \qproc[i]{ClosestPair} satisfy the requirements of Fact \ref{fact:packing} by setting $\gamma = t$, so we can conclude this happens only $c_{s t}$ times. The lemma follows.

We will now show that $\ell(A_j, B_j) \in [\ell(A_i, B_i)/2, 2\ell(A_i, B_i)]$. Recall the following from Section \ref{section:preliminaries}:
\begin{align*}
\max(A_i, B_i) &\leq \frac{3}{2} \min(A_i, B_i) \\
\max(A_j, B_j) &\leq \frac{3}{2} \min(A_j, B_j) \\
\min(A_i, B_i) \leq \ell(A_i, B_i) &\leq \frac{5}{4} \min(A_i, B_i) \\
\min(A_j, B_j) \leq \ell(A_j, B_j) &\leq \frac{5}{4} \min(A_j, B_j) \\
\end{align*}

The algorithm ensures the following:
\begin{align*}
\min(A_j, B_j) &\leq \min(Q) \leq \max(A_i, B_i)\\
\min(A_i, B_i) &\leq \min(Q) \leq \max(A_j, B_j)\\
\end{align*}

Combining these we have:
\begin{align*}
\ell(A_i, B_i) \leq \frac{5}{4} \min(A_i, B_i) &\leq \frac{5}{4} \frac{3}{2} \min(A_j, B_j) < 2 \ell(A_j, B_j)\\
\ell(A_i, B_i) &\leq 2 \ell(A_j, B_j) \\
\ell(A_j, B_j) \leq \frac{5}{4} \min(A_j, B_j) &\leq \frac{5}{4} \frac{3}{2} \min(A_i, B_i) < 2 \ell(A_i, B_i)\\
\ell(A_j, B_j) / 2 &\leq \ell(A_i, B_i) \\
\end{align*}

It follows that $\ell(A_j, B_j) \in [\ell(A_i, B_i)/2, 2\ell(A_i, B_i)]$.
\end{proof}

\begin{theorem} \label{theorem:full}
Algorithm \qproc{GreedySpanner} computes the greedy spanner for dilation $t$ in $O\left(n^2 \log^2 n \frac{1}{(t-1)^{3d}}+ n^2 \log n \frac{1}{(t-1)^{4d}}\right)$ time and $O\left(\frac{1}{(t-1)^d} n\right)$ space.
\end{theorem}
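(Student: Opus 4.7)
The plan is to treat space and running time separately, leaning on the four preceding results: Observation~\ref{observation:greedyedges} bounds the edge set, Lemma~\ref{lemma:dijkstra} bounds a single \qproc[i]{ClosestPair} call, Lemma~\ref{lemma:closestpaircount} bounds the number of such calls per pair, and Fact~\ref{fact:manyone} bounds the sum $\sum_i \min(|A_i|,|B_i|)$.

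For space, I would observe that every persistent data structure has size $O(s^d n) = O(n/(t-1)^d)$: the WSPD has $O(s^d n)$ pairs together with their precomputed quantities, the edge set $E$ has size $O(n/(t-1)^d)$ by Observation~\ref{observation:greedyedges}, and the priority queue $Q$ holds at most one entry per well-separated pair. Each \qproc[i]{ClosestPair} invocation uses only $O(n)$ transient memory by Lemma~\ref{lemma:dijkstra}, and this memory is reused across calls.

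For time, I would bound the aggregate cost $T$ of the \qproc[i]{ClosestPair} invocations. Combining Lemmas~\ref{lemma:dijkstra} and~\ref{lemma:closestpaircount}, pair $i$ contributes at most $(1+c_{st})\cdot\min(|A_i|,|B_i|)\cdot(n\log n+|E|)$, and throughout the run $|E|=O(n/(t-1)^d)$ since $|E|$ only grows and is upper-bounded by the final greedy-spanner size. Pulling $c_{st}$ (which depends only on $s$ and $t$, not on $i$) out of the sum and applying Fact~\ref{fact:manyone} gives $T = O\bigl(c_{st}\, s^d n\log n\cdot(n\log n + n/(t-1)^d)\bigr)$. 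Substituting $s^d = O(1/(t-1)^d)$ and $c_{st} = O(s^d(1+ts)^d) = O(s^{2d}) = O(1/(t-1)^{2d})$ (using $t<2$ and $s>4$ so that $1+ts = O(s)$) produces exactly the two terms of the claimed bound. All remaining overhead is subsumed: the WSPD construction and sort cost $O(n\log n + s^d n\log n)$; and the filter on line~\ref{algo:loop}, implemented by scanning $Q$ once per extracted edge, costs $O(|Q|)$ per edge for a total of $O(n^2/(t-1)^{2d})$.

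The main obstacle is careful bookkeeping rather than a new idea: one must verify that $c_{st}$ is truly independent of $i$ so it factors out of the sum over well-separated pairs, and that the edge-count bound is valid during execution and not only at termination. Once those two points are settled, the final estimate is a direct combination of the cited lemmas with the choice $s = 2t/(t-1)$.
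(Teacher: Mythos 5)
Your proposal is correct and follows essentially the same route as the paper: bounding each \qproc[i]{ClosestPair} call via Lemma~\ref{lemma:dijkstra} and Observation~\ref{observation:greedyedges}, multiplying by the $1+c_{st}$ call count from Lemma~\ref{lemma:closestpaircount}, summing with Fact~\ref{fact:manyone}, and substituting $s=2t/(t-1)$ to get the two stated terms, with all persistent structures of size $O(n/(t-1)^d)$ and transient Dijkstra memory reused. The only (immaterial) divergence is that you handle the line~\ref{algo:loop} filter by an $O(|Q|)$ scan per added edge rather than the paper's $O(m^2)$ precomputation; both are dominated by the main term, and your variant is in fact the one the paper later adopts as a practical optimization.
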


\begin{proof}
We can easily precompute which well-separated pairs are close to a particular well-separated pair as needed in line \ref{algo:loop} in $O(m^2)$ time, without affecting the running time. By fact \ref{fact:packing} there are only at most $c_{s t}$ such well-separated pairs per well-separated pair, so this uses $O\left(\frac{1}{(t-1)^d} n\right)$ space.

Combining Observation \ref{observation:greedyedges} with Lemma \ref{lemma:dijkstra} we conclude that we can compute \qproc[i]{ClosestPair} in
\begin{align*}
O\left(\min(|A_i|, |B_i|)\left(n \log n + \frac{1}{(t-1)^d} n\right)\right)
\end{align*}
time and $O(n)$ space. By Lemma \ref{lemma:closestpaircount} the time taken by all \qproc[i]{ClosestPair} calls is therefore
\begin{align*}
O\left(\sum_{i=1}^m (1 + c_{s t}) \min(|A_i|, |B_i|)\left(n \log n + \frac{1}{(t-1)^d} n\right)\right)
\end{align*}
and its space usage is $O(n)$ by reusing the space for the calls. Using Fact \ref{fact:manyone}, this is at most
\begin{align*}
O\left(\frac{1}{(t-1)^d}\left(1+\frac{t}{t-1}\right)^d\frac{1}{(t-1)^d} n \log n \left(n \log n + \frac{1}{(t-1)^d} n\right) \right)
\end{align*}
which simplifies to
\begin{align*}
O\left(n^2 \log^2 n \frac{1}{(t-1)^{3d}} + n^2 \log n \frac{1}{(t-1)^{4d}}\right)\;.
\end{align*}

The time taken by all other steps of the algorithm is insignificant compared to the time used by \qproc[i]{ClosestPair} calls. These other steps are: computing the WSPD and all actions with regard to the queue. All these other actions use $O\left(\frac{1}{(t-1)^d} n\right)$ space. Combining this with Theorem \ref{theorem:correctness}, the theorem follows.
\end{proof}

\section{Making the algorithm practical} \label{section:practical}

Experiments suggested that implementing the above algorithm as-is does not yield a practical algorithm. With the four optimizations described in the following sections, the algorithm attains running times that are a small constant slower than the algorithm introduced in \cite{FarshiG09} called FG-greedy, which is considered the best practical algorithm known in literature.

\subsection{Finding close-by pairs}
The algorithm at some point needs to know which pairs are `close' to the pair for which we are currently adding an edge. In our proof above, we suggested that these pairs be precomputed in $O(m^2)$ time. Unfortunately, this precomputation step turns out to take much longer than the rest of the algorithm. If $n=100$, then (on a uniform pointset) $m \approx 2000$ and $m^2 \approx 4000000$ while the number of edges $e$ in the greedy spanner is about 135. Our solution is to simply find them using a linear search every time we need to know this information. This only takes $O(e \cdot m)$ time, which is significantly faster.

\subsection{Reducing the number of Dijkstra computations}

After decreasing the time taken by preprocessing, the next part that takes the most time are the Dijkstra computations, whose running time dwarfs the rest of the operations. We would therefore like to optimize this part of the algorithm. For every well-separated pair, we save the length of the shortest path found by any Dijkstra computation performed on it, that is, its source $s$, target $t$ and distance $\delta(s, t)$. Then, if we are about to perform a Dijkstra computation on a vertex $u$, we first check if the saved path is already good enough to `cover' all nodes in $B_i$. Let $c$ be the center of the circle around the bounding box of $B_i$ and $r$ its radius. We check if $t \cdot |us| + \delta(s, t) + t \cdot (|tc| + r) \leq t \cdot (|uc| - r)$ and mark it as `irrelevant for the rest of the algorithm'. This optimization roughly improves its running time by a factor three.

\subsection{Sharpening the bound of Observation \ref{observation:faraway}}

The bound given in Observation \ref{observation:faraway} can be improved. Let $\{ A_i, B_i \}$ be the well-separated pair for which we just added an edge and let $\{ A_j, B_j \}$ be the well-separated pair under consideration in our linear search. First, some notation: let $X_k, X_l$ be sets belonging to some well-separated pair (not necessarily the same pair), then $\min(X_k, X_l)$ denotes the (shortest) distance between the two circles around the bounding boxes of $X_k$ and $X_l$ and $\max(X_k, X_l)$ the longest distance between these two circles. Let $\ell = \ell(A_i, B_i)$. We can then replace the condition of Lemma \ref{observation:faraway} by the sharper condition $\min(A_i, A_j) + \ell + \min(B_j, B_i) \leq t \cdot \max(A_j, B_j) \vee \min(A_i, B_j) + \ell + \min(A_j, B_i) \leq t \cdot \max(B_j, A_j)$ The converse of the condition implies that the edge just added cannot be part of a $t$-path between a node in $\{ A_j, B_j \}$, so the correctness of the algorithm is maintained. This leads to quite a speed increase.

\subsection{Miscellaneous optimizations}

There are two further small optimizations we have added to our implementation.

Firstly, rather than using the implicit linear space representation of the WSPD, we use the explicit representation where every node in the split tree stores the points associated with that node. For point sets where the ratio of the longest and the shortest distance is bounded by some polynomial in $n$, this uses $O(n \log n)$ space rather than $O(n)$ space. This is true for all practical cases, which is why we used it in our implementation. For arbitrary point sets, this representation uses $O(n^2)$ space. In practice, this extra space usage is hardly noticeable and it speeds up access to the points significantly.

Secondly, rather than performing Dijkstra's algorithm, we use the $A^*$ algorithm. This algorithm uses geometric estimates to the target to guide the computation to its goal, thus reducing the search space of the algorithm \cite{Goldberg:2005:CSP:1070432.1070455}.

We have tried a number of additional optimizations, but none of them resulted in a speed increase. We describe them here.

We have tried to replace $A^*$ by $ALT$, a shortest path algorithm that uses landmarks -- see \cite{Goldberg:2005:CSP:1070432.1070455} for details on ALT -- which gives better lower bounds than the geometric estimates used in $A^*$. However, this did not speed up the computations at all, while costing some amount of overhead.

We have also tried to further cut down on the number of Dijkstra computations. We again used that we store the lengths of the shortest paths found so far per well-separated pair. Every time after calling \qproc[i]{ClosestPair} we checked if the newly found path is `good enough' for other well-separated pairs, that is, if the path combined with $t$-paths from the endpoints of the well-separated pairs would give $t$-paths for all pairs of points in the other well-separated pair. This decreased the number of Dijkstra computations performed considerably, but the overhead from doing this for all pairs was greater than its gain.

We tried to speed up the finding of close-by pairs by employing range trees. We also tried using the same range trees to perform the optimization of the previous paragraph only to well-separated pairs `close by' our current well-separated pair. Both optimizations turned out to give a speed increase and in particular the second retained most of its effectiveness even though we only tried it on close-by pairs, but the overhead of range trees was vastly greater than the gain -- in particular the space usage of range trees made the algorithm use about as much space as the original greedy algorithms.

\section{Experimental results} \label{section:results}

We have run our algorithm on point sets whose size ranged from 100 to 1,000,000 vertices on several distributions. If the set contained at most 10,000 points, we have also run the FG-greedy algorithm to compare the two algorithms. We have recorded both space usage and running time (wall clock time). We have also performed a number of tests with decreasing values of $t$ on datasets of size 10,000 and 50,000. Finally, as this is the first time we can compute the greedy spanner on large graphs, we have compared it to the $\Theta$-graph and WSPD-based spanners on large instances.

We have used three kinds of distributions from which we draw our points: a uniform distribution, a gamma distribution with shape parameter 0.75, and a distribution consisting of $\sqrt{n}$ uniformly distributed pointsets of $\sqrt{n}$ uniformly distributed points. The results from the gamma distribution were nearly identical to those of the uniform pointset, so we did not include them. All our pointsets are two-dimensional.

\subsection{Experiment environments}

The algorithms have been implemented in C++. We have implemented all data structures not already in the \verb|std|. The random generator used was the Mersenne Twister PRNG -- we have used a C++ port by J. Bedaux of the C code by the designers of the algorithm,  M. Matsumoto and T. Nishimura.

We have used two servers for the experiments. Most experiments have been run on the first server, which uses an Intel Core i5-3470 (3.20GHz) and 4GB (1600 MHz) RAM. It runs the Debian 6.0.7 OS and we compiled for 32 bits using G++ 4.7.2 with the -O3 option. For some tests we needed more memory, so we have used a second server. This server uses an Intel Core i7-3770k (3.50GHz) and 32 GB RAM. It runs Windows 8 Enterprise and we have compiled for 64 bits using the Microsoft C++ compiler (17.00.51106.1) with optimizations turned on.

\subsection{Dependence on instance size}

Our first set of tests compared FG-greedy and our algorithm for different values of $n$. The results are plotted in Fig.~\ref{figure:plot}. As FG-greedy could only be ran on relatively small instances, its data points are difficult to see in the graph, so we added a zoomed-in plot for the bottom-left part of the plot.

\begin{figure}[h!]\centering
\includegraphics[width=6cm]{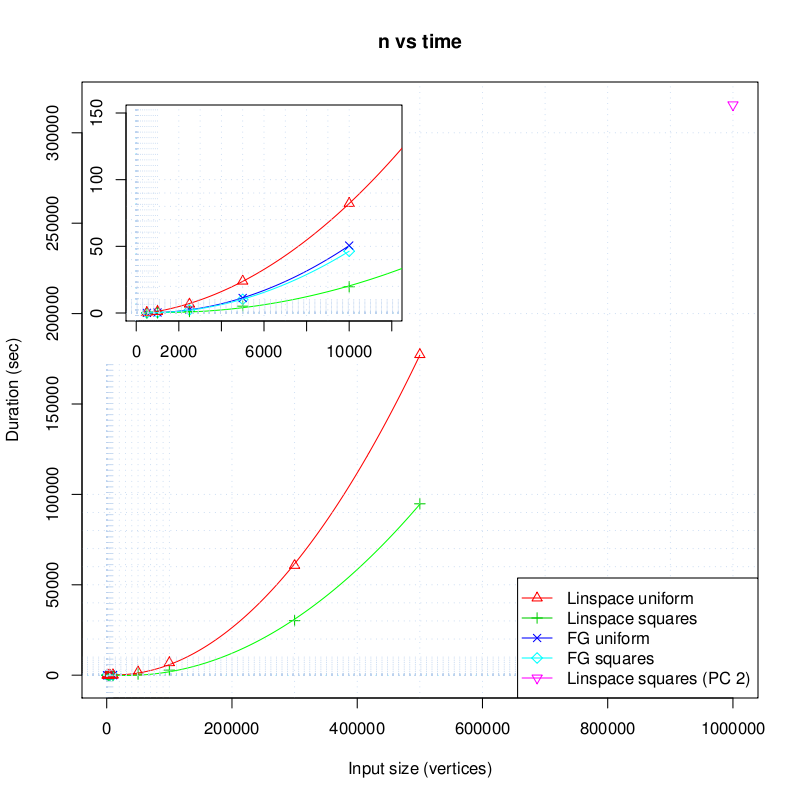}
\includegraphics[width=6cm]{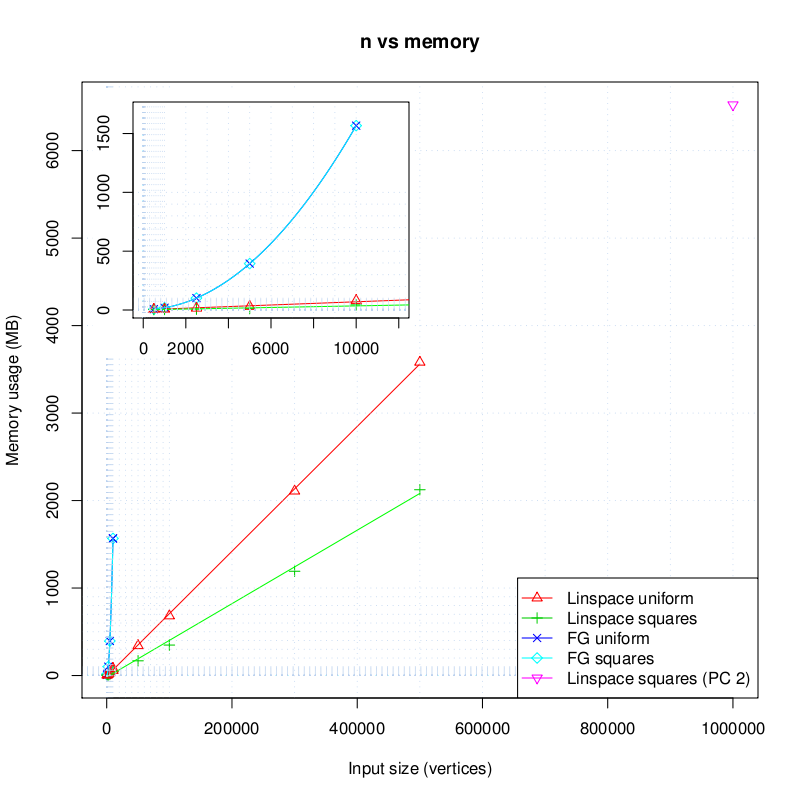}
\caption{The left plot shows the running time of our algorithm on uniform and clustered data for variously sized instances. The right plot shows the memory usage of our algorithm on the same data. The lines are fitted quadratic (right) and linear (left) curves. The outlier at the right side was from an experiment performed on a different server. Results for FG-greedy are also shown but were near-impossible to see, so a zoomed-in view of the leftmost corner of both plots is included in the top-left of both plots. The memory usage explosion of FG-greedy is visible in the right plot.}
\label{figure:plot}
\end{figure}

We have used standard fitting methods to our data points: the running time of all algorithms involved fits a quadratic curve well, the memory usage of our algorithm is linear and the memory usage of FG-greedy is quadratic. This nicely fits our theoretical analysis. In fact, the constant factors seem to be much smaller than the bound we gave in our proof. We do note a lack of `bumps' that are often occur when instance sizes start exceeding caches: this is probably due to the cache-unfriendly behavior of our algorithm and the still significant constant factor in our memory usage that will fill up caches quite quickly.

Comparing our algorithm to FG-greedy, it is clear that the memory usage of our algorithm is vastly superior. The plot puts into perspective just how much larger the instances are that we are able to deal with using our new algorithm compared to the old algorithms. Furthermore, our algorithm is about twice as fast as FG-greedy on the clustered datasets, and only about twice as slow on uniform datasets. On clustered datasets the number of computed well-separated pairs is much smaller than on uniform datasets so this difference does not surprise us. These plots suggest that our aim -- roughly equal running times at vastly reduced space usage -- is reached with this algorithm.

\subsection{Dependence on $t$}

We have tested our algorithms on datasets of 10,000 and 50,000 points, setting $t$ to $1.1$, $1.2$, $1.4$, $1.6$, $1.8$ and $2.0$ to test the effect of this parameter. The effects of the parameter ended up being rather different between the uniform and clustered datasets.

\begin{figure}[h!]\centering
\includegraphics[width=6cm]{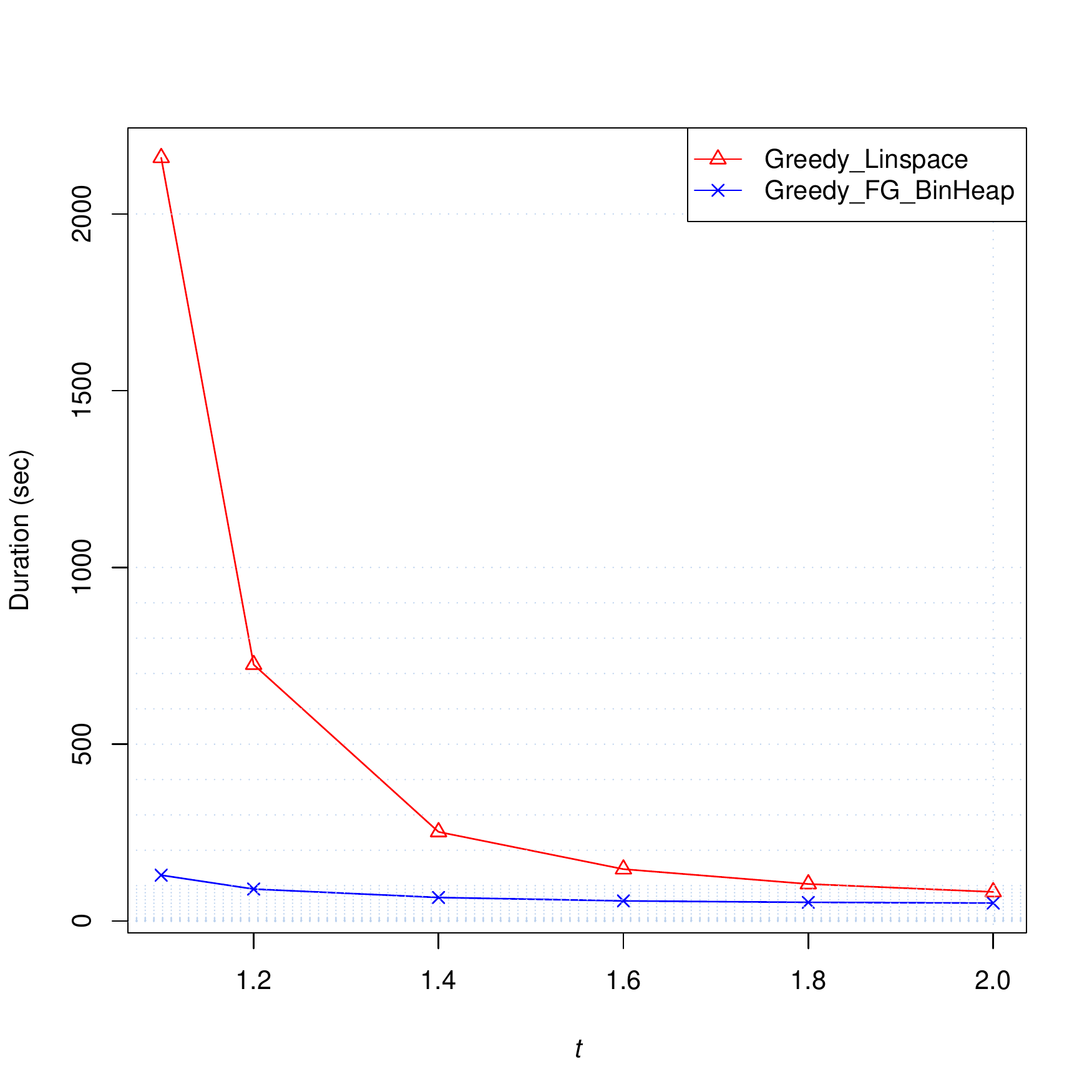}
\includegraphics[width=6cm]{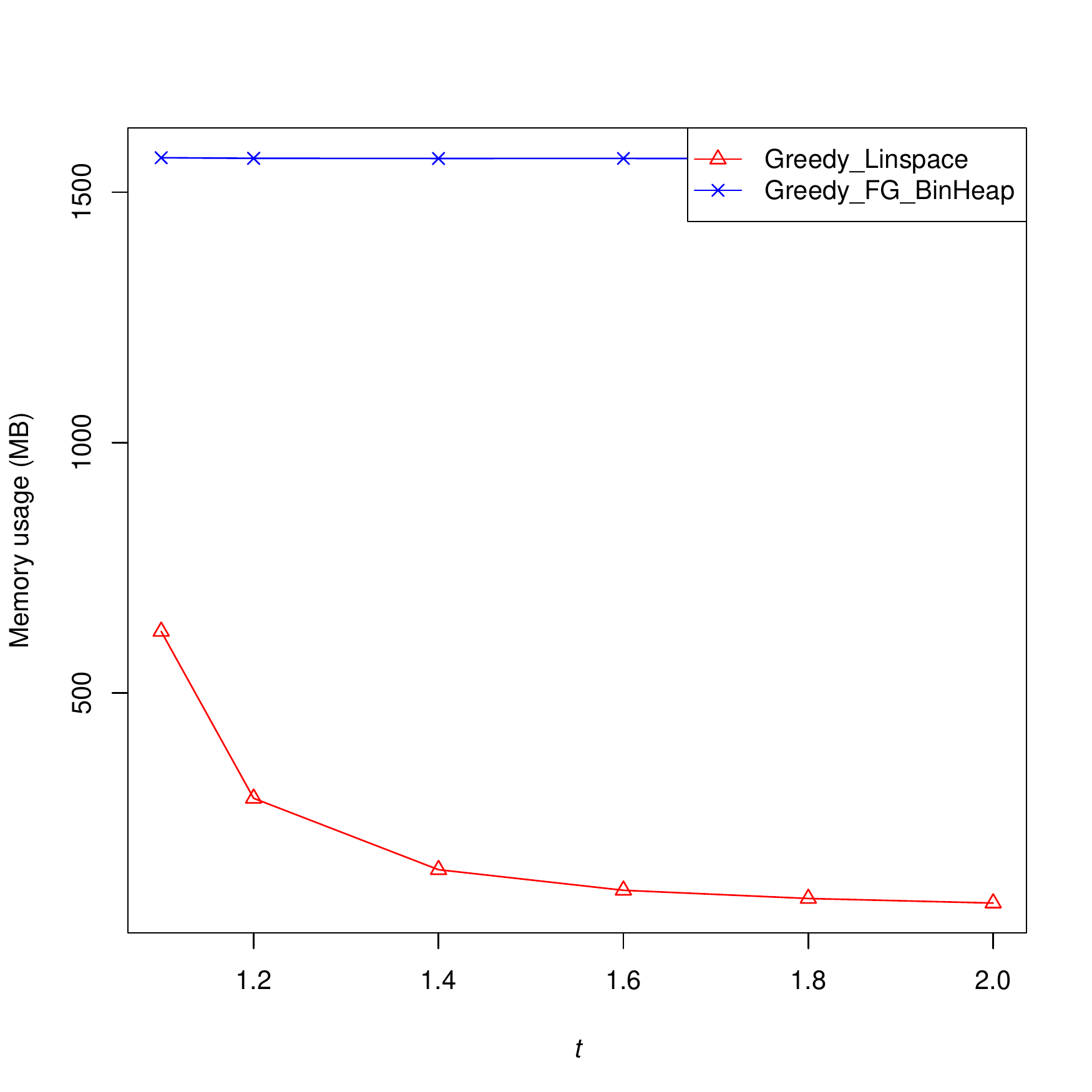}
\caption{The left plot shows the running time of our algorithm on a dataset of 10,000 uniformly distributed points for various values of $t$. The right plot shows the memory usage of our algorithm for the same settings.}
\label{figure:uniplot1}
\end{figure}

\begin{figure}[h!]\centering
\includegraphics[width=6cm]{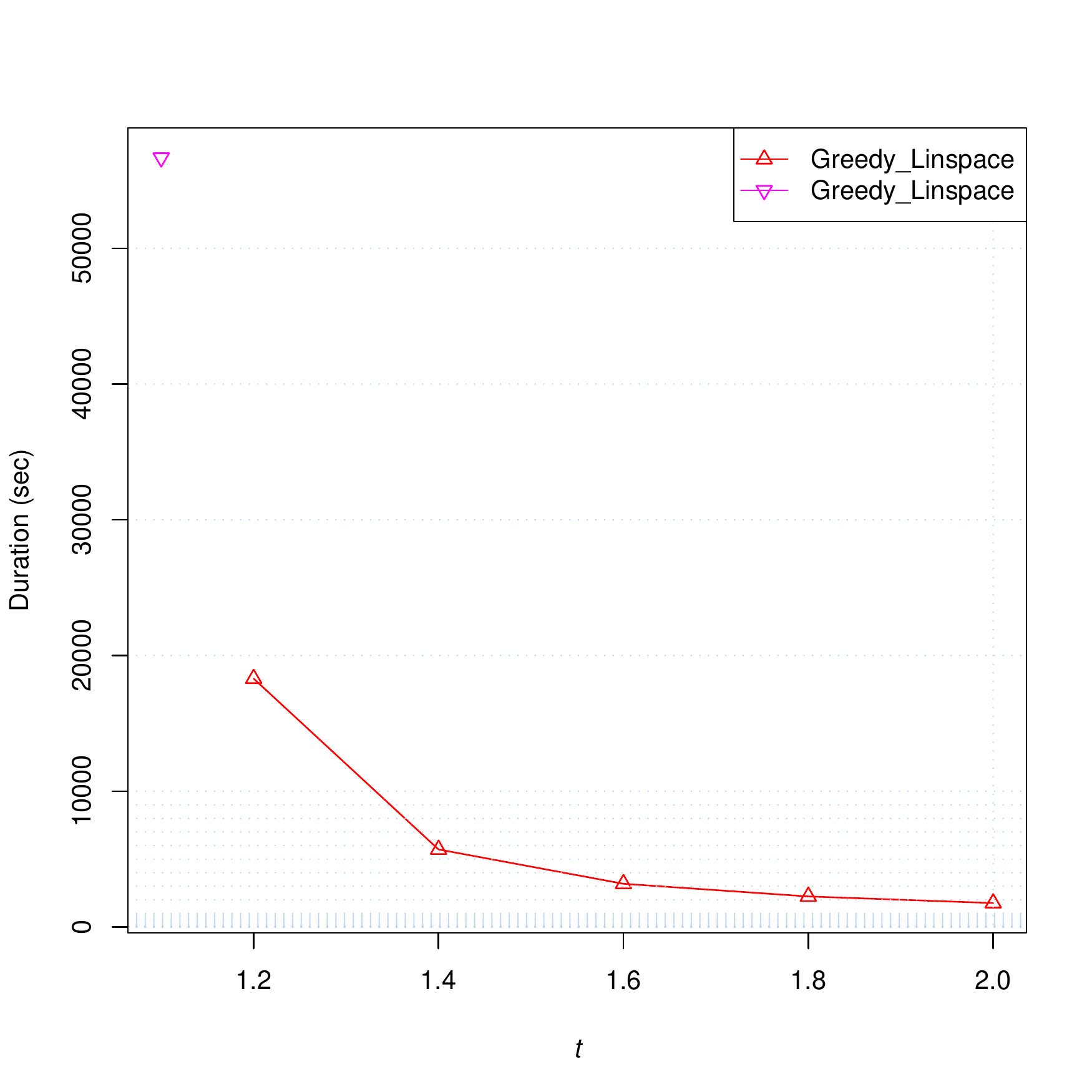}
\includegraphics[width=6cm]{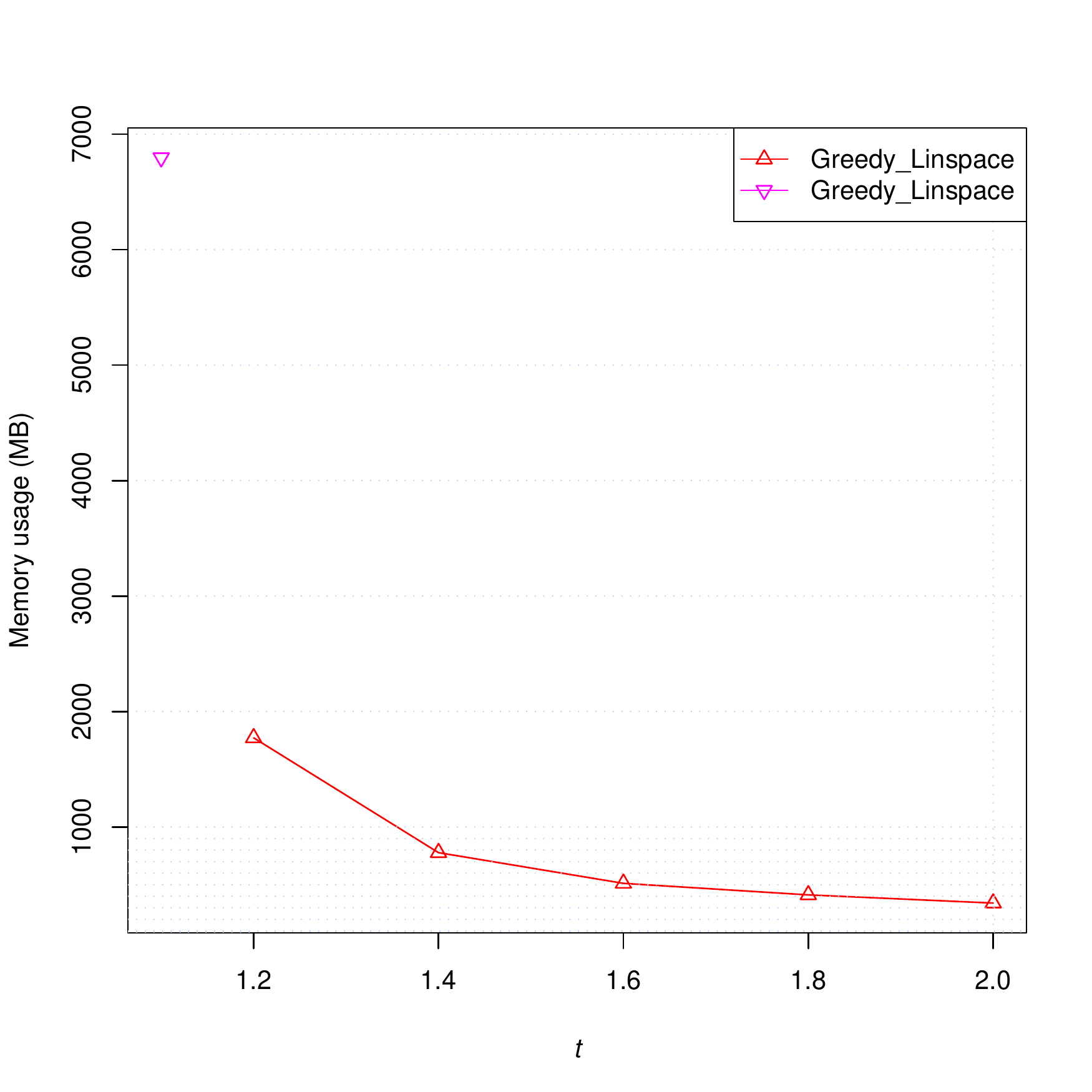}
\caption{The left plot shows the running time of our algorithm on a dataset of 50,000 uniformly distributed points for various values of $t$. The right plot shows the memory usage of our algorithm for the same settings.}
\label{figure:uniplot2}
\end{figure}

On uniform pointsets, see Figures \ref{figure:uniplot1} and \ref{figure:uniplot2}, our algorithm is about as fast as FG-greedy when $t=2$, but its performance degrades quite rapidly as $t$ decreases compared to FG-greedy. A hint to this behavior is given by the memory usage of our algorithm: it starts vastly better but as $t$ decreases it becomes only twice as good as FG-greedy. This suggests that the number of well-separated pairs grows rapidly as $t$ decreases, which explains the running time decrease.

\begin{figure}[h!]\centering
\includegraphics[width=6cm]{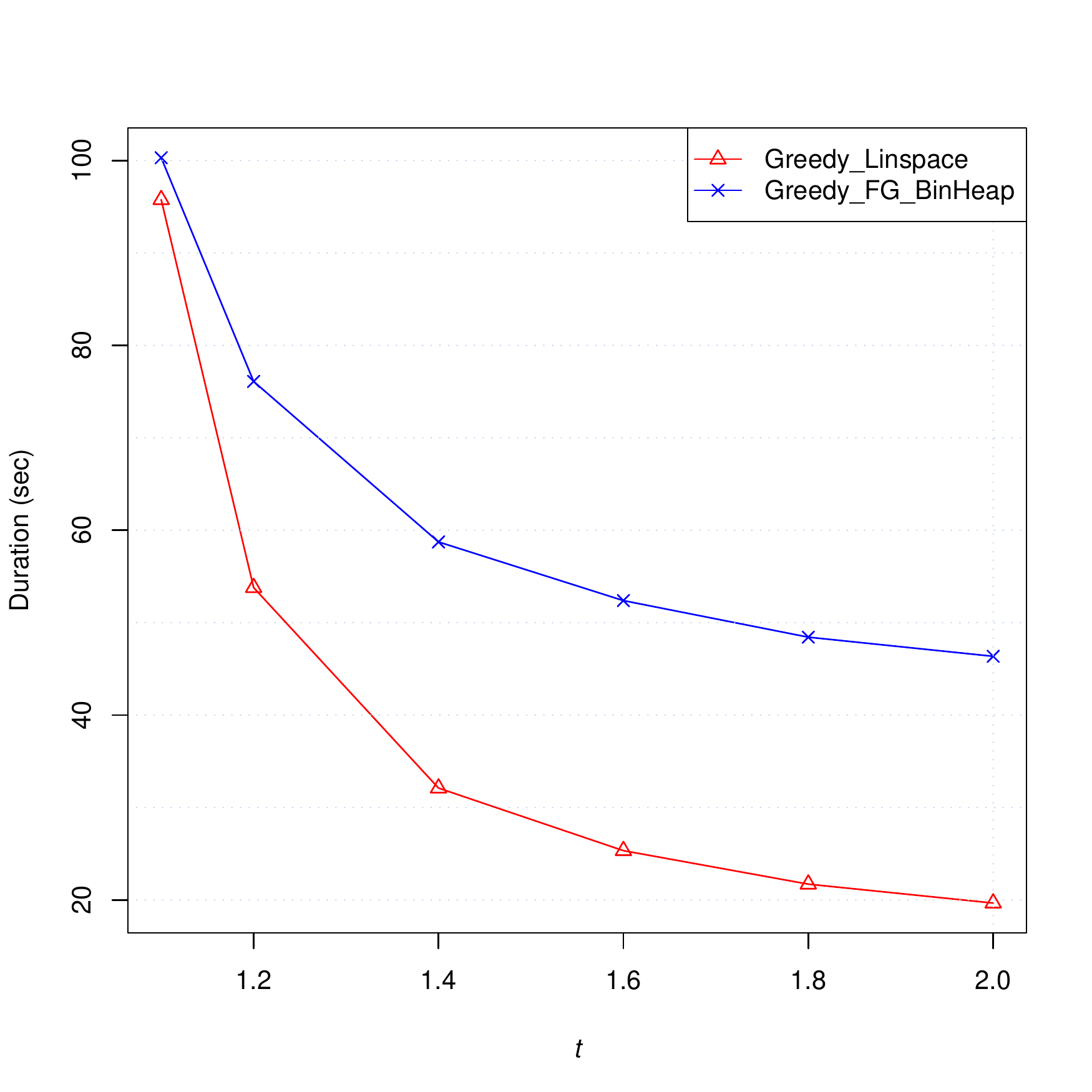}
\includegraphics[width=6cm]{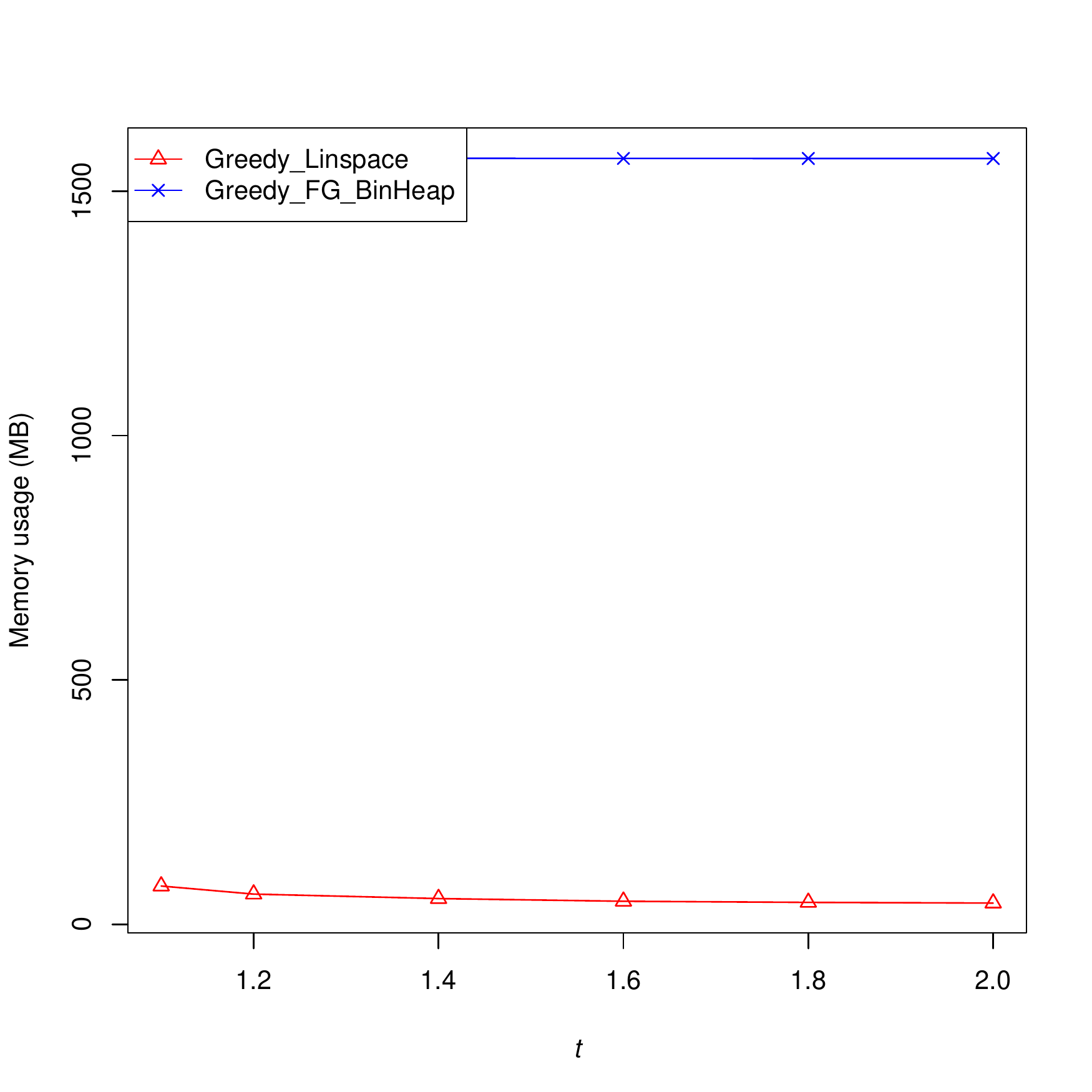}
\caption{The left plot shows the running time of our algorithm on a dataset of 10,000 clustered points for various values of $t$. The right plot shows the memory usage of our algorithm for the same settings.}
\label{figure:clusteredplot1}
\end{figure}

\begin{figure}[h!]\centering
\includegraphics[width=6cm]{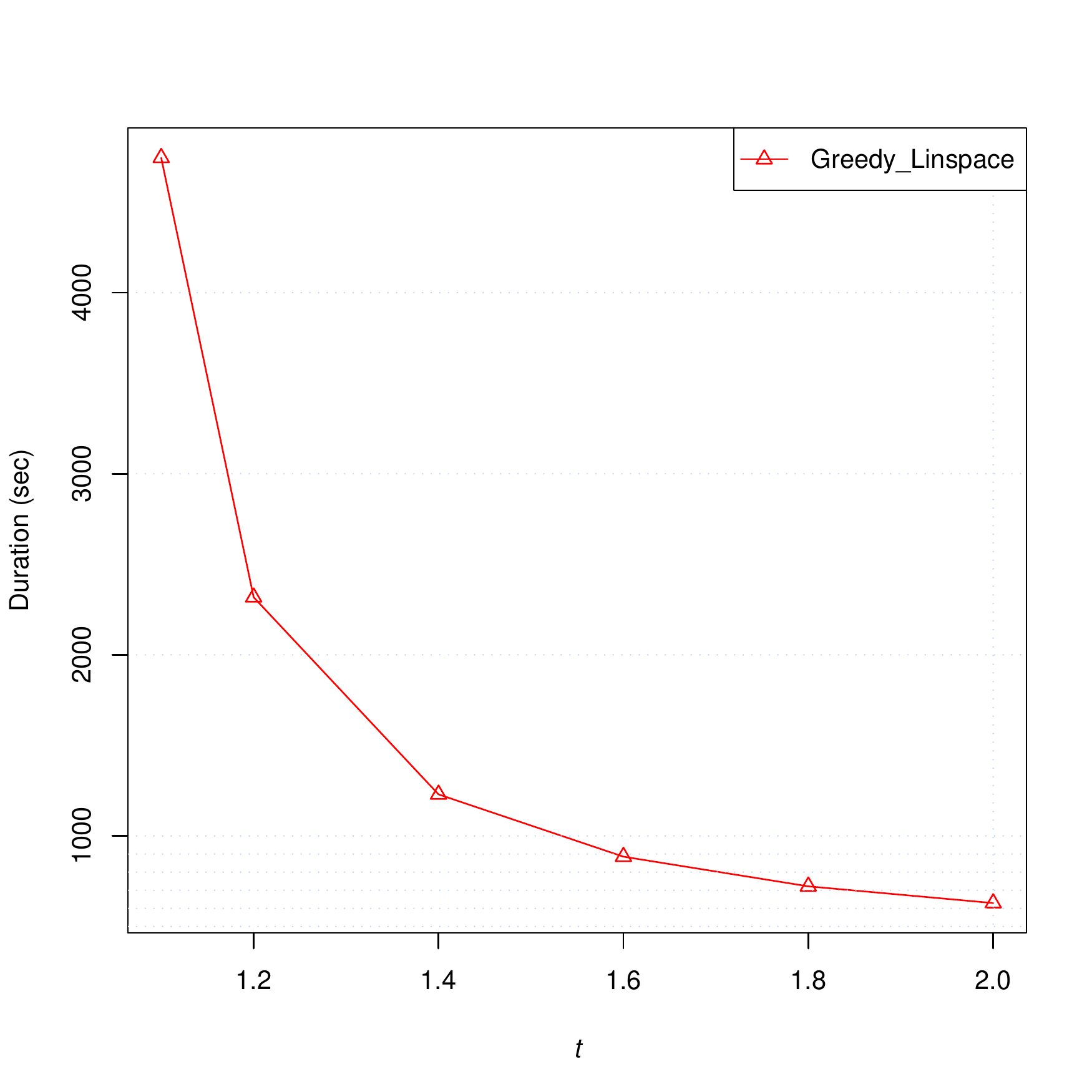}
\includegraphics[width=6cm]{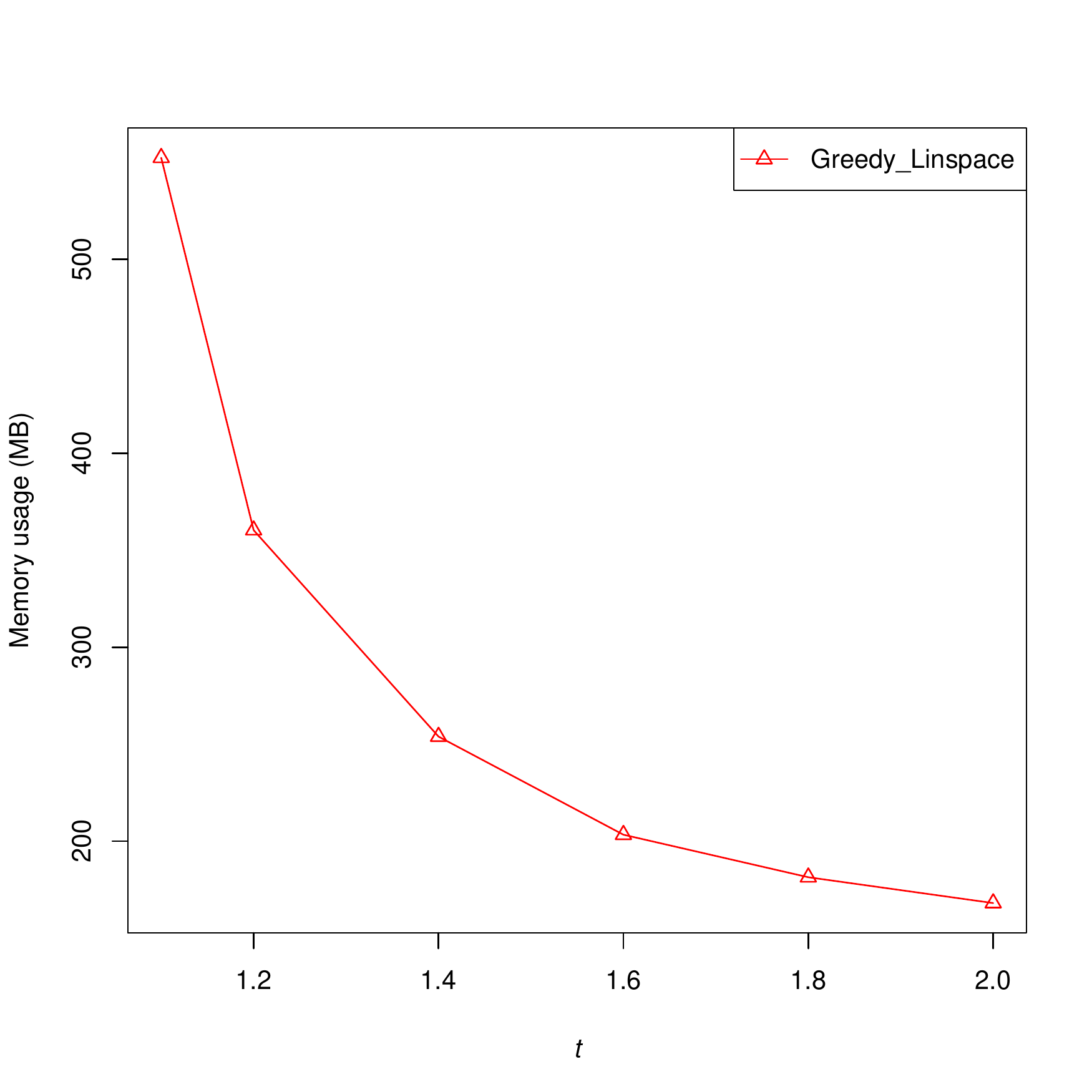}
\caption{The left plot shows the running time of our algorithm on a dataset of 50,000 clustered points for various values of $t$. The right plot shows the memory usage of our algorithm for the same settings.}
\label{figure:clusteredplot2}
\end{figure}

On clustered pointsets, see Figures \ref{figure:clusteredplot1} and \ref{figure:clusteredplot2}, the algorithms compare very differently. FG-greedy starts out twice as slow as our algorithm when $t=2$ and when $t=1.1$, our algorithm is only slightly faster than FG-greedy. The memory usage of our algorithm is much less dramatic than in the uniform point case: it hardly grows with $t$ and therefore stays much smaller than FG-greedy. The memory usage of FG-greedy only depends on the number of points and not on $t$ or the distribution of the points, so its memory usage is the same.

\subsection{Comparison with other spanners}

We have computed the greedy spanner on the instance shown in Fig.~\ref{figure:greedy}, which has 115,475 points. On this instance the greedy spanner for $t=2$ has 171,456 edges, a maximum degree of 5 and a weight of 11,086,417. On the same instance, the $\Theta$-graph with $k=6$ has 465,230 edges, a maximum degree of 62 and a weight of 53,341,205. The WSPD-based spanner has 16,636,489 edges, a maximum degree of 1,271 and a weight of 20,330,194,426.

As shown in Fig.~\ref{figure:plot}, we have computed the greedy spanner on a pointset of 500,000 uniformly distributed points. On this instance the greedy spanner for $t=2$ has 720,850 edges, a maximum degree of 6 and a weight of 9,104,690. On the same instance, the $\Theta$-graph with $k=6$ has 2,063,164 edges, a maximum degree of 22 and a weight of 39,153,380. We were unable to run the WSPD-based spanner algorithm on this pointset due to its memory usage.

As shown in Fig.~\ref{figure:plot}, we have computed the greedy spanner on a pointset of 1,000,000 clustered points. On this instance the greedy spanner for $t=2$ has 1,409,946 edges, a maximum degree of 6 and a weight of 4,236,016. On the same instance, the $\Theta$-graph with $k=6$ has 4,157,016 edges, a maximum degree of 135 and a weight of 59,643,264. We were unable to run the WSPD-based spanner algorithm on this pointset due to its memory usage.

We have computed the greedy spanner on a pointset of 50,000 uniformly distributed points with $t=1.1$. On this instance the greedy spanner has 225,705 edges, a maximum degree of 18 and a weight of 15,862,195. On the same instance, the $\Theta$-graph with $k=73$ (which is the smallest $k$ for which a guarantee of $t=1.1$ has been proven to our knowledge) has 2,396,361 edges, a maximum degree of 146 and a weight of 495,332,746. We were unable to run the WSPD-based spanner algorithm on this pointset with $t=1.1$ due to its memory usage.

These results show that the greedy spanner really is an excellent spanner, even on large instances and for low $t$, as predicted by its theoretical properties.

\section{Conclusion}

We have presented an algorithm that computes the greedy spanner in Euclidean space in $O(n^2 \log^2 n)$ time and $O(n)$ space for any fixed stretch factor and dimension. Our algorithm avoids computing all distances by considering well-separated pairs instead. It consists of a framework that computes the greedy spanner given a subroutine for a bichromatic closest pair problem. We give such a subroutine which leads to the desired result.

We have presented several optimizations to the algorithm. Our experimental results show that these optimizations make our algorithm have a running time close to the fastest known algorithms for the greedy spanner, while massively decreasing space usage. It allowed us to compute the greedy spanner on very large  instances of a million points, compared to the earlier instances of at most 13,000 points. Given that our algorithm is the first algorithm with a near-quadratic running time guarantee that has actually been implemented, that it has linear space usage and that its running time is comparable to the best known algorithms, we think our algorithm is the method of choice to compute greedy spanners.

We leave open the problem of providing a faster subroutine for solving the \emph{bichromatic closest pair with dilation larger than $t$} problem in our framework, which may allow the greedy spanner to be computed in subquadratic time. Particularly the case of the Euclidean plane seems interesting, as the closely related `ordinary' \emph{bichromatic closest pair} problem can be solved quickly in this setting.

\bibliography{refs}
\bibliographystyle{abbrv}

\end{document}